\theoremstyle{plain}
\newtheorem{thm}{Theorem}
\newtheorem{lem}{Lemma}
\newtheorem{Def}{Definition}
\newtheorem{pro}{Proposition}
\newtheorem{rem}{Remark}
\newtheorem*{def*}{Definition}
\newtheorem*{prob*}{Direct Monodromy Problem}
\newtheorem*{lem*}{\textsc{Lemma}}
\newtheorem*{cor*}{\textsc{Corollary}}
\newtheorem*{con*}{\textsc{Conjecture}}
\newcommand{\e}{\varepsilon}
\newcommand{\bb}[1]{\mathbb{ #1 }}
\newcommand{\du}[1]{\frac{\delta #1}{\delta u}}
\newcommand{\eu}{{\cal{E}}}
\newcommand{\beq}{\begin{equation}}
\newcommand{\eeq}{\end{equation}}
\newcommand{\de}{\partial}
\newtheorem{exa}{Example}
\newcommand{\norm}[1]{\left| #1 \right|}
\title{A deformation of the method of characteristics and the Cauchy problem for Hamiltonian PDEs in the small dispersion limit}
\author{D.~Masoero\thanks{Grupo de F\'isica Matem\'atica,
Universidade de Lisboa, Av. Prof. Gama Pinto, 2 1649-003 Lisboa,  Portugal.  dmasoero@gmail.com }\,\,\, and A.~Raimondo\thanks{SISSA, Via Bonomea 265, 34136 Trieste, Italy.  andrea.raimondo@sissa.it}}
\date{}
\begin{document}

\maketitle

\abstract{We introduce a deformation of the method of characteristics valid for Hamiltonian perturbations
of a scalar conservation law in the small dispersion limit. Our method of analysis is based on the
\lq variational string equation', a functional-differential relation originally introduced by Dubrovin in a particular case,
of which we lay the mathematical foundation.

Starting from first principles, we construct the string equation explicitly up to
the fourth order in perturbation theory, and we show that the solution to the Cauchy problem of the Hamiltonian PDE satisfies
the appropriate string equation in the small dispersion limit. We apply our construction to  explicitly compute the first
two perturbative corrections of the solution to the general Hamiltonian PDE. In the KdV case, we prove the existence of
a quasi-triviality transformation at any order and for arbitrary initial data.}

\section*{Introduction}
The present paper is devoted to the study of Hamiltonian perturbations of the scalar quasilinear conservation law
\beq\label{unpert}
u_{t}=a(u)\,u_{x},
\eeq
where $u=u(x,t),\,x,t\in\bb{R}$, and $a$ is a non-constant function of $u$. This equation admits a formal Hamiltonian representation, given by
$$u_{t}(x)=\left\{u(x),H^{0}_{h}\right\}, \qquad \quad H^{0}_{h}:=\int h(u) dx, \quad h''=a,$$
where the Poisson bracket is the Gardner-Zakharov-Faddeev bracket: 
\beq\label{deltapb}
\left\{H,K\right\}:=\int \du{H}\frac{d}{dx}\du{K}\,dx,
\eeq
for every pair of functionals $H, K$. The variational derivative $\du{}$  is defined for any local functional $H=\int h(u,u_{x},u_{xx},\dots) dx$ through the Euler-Lagrange
operator $\mathcal{E}$, namely:
\begin{equation}\label{eulerlagrange}
\mathcal{E}(h):=\sum_{k\geq 0}(-1)^{k}\frac{d^{k}}{d x^{k}}\frac{\de h}{\de u^{(k)}}=:\du{H} .
\end{equation}
The standard method of characteristics provides local classical solutions of the Cauchy problem for \eqref{unpert} by solving the implicit equation
\beq\label{intro:hodo}
x+t\,a(u)=g'(u),
\eeq
where $g'$ is a local inverse of the initial data. The time lifespan of the solutions for generic initial data is finite: there exists a $t_{c}<\infty$, known as the \emph{critical time}, beyond which no classical solution exists. The point $(x_{c},t_{c})$ where the solution breaks is known as the \emph{critical point} of the solution.\\

The class of Hamiltonian perturbations of equation \eqref{unpert} considered in the present work is obtained by keeping fixed the Poisson bracket \eqref{deltapb},
and formally deforming the Hamiltonian $H_{h}^{0}$ in the following way:
$$H_{h}:=H_{h}^{0}+\e^{2}H^{2}+\e^{4}\,H^{4}+\dots \,.$$
Here, $\e$ is a (small) parameter, every functional $H^{k}$ is required to be local, and
the corresponding density to be a homogeneous polynomial in the derivatives $u^{(j)}, \,j=1,2,\dots$, of total degree $k$, where $\text{deg}\,u^{{j}}=j$.
As an example, the KdV equation $u_t=uu_x+\e^2 u_{xxx}$ fits in this class by choosing $h=\frac{u^{3}}{6}$, $H^{2}=-\frac{1}{2}\int u_{x}^{2}dx$, and $H^{k}=0,\,k>2.$\\

The above class of Hamiltonian perturbations has been studied in detail by Dubrovin in \cite{du06}, where in particular he provides a complete
classification (up to order $\e^{4}$) of families of commuting  Hamiltonians, and he formulates an important universality conjecture on the
critical behaviour of Hamiltonian perturbations. According to the conjecture, for any perturbed PDE and for generic initial data the solution
of the Cauchy problem exists up to some time, bigger than the critical time of the unperturbed equation. Moreover, around the critical point,
every such solution is locally described by a particular solution of a $4^{th}$-order ODE of Painlev\'e type. The universality conjecture was proved
to hold for any equation of the KdV hierarchy,  for particular classes of solutions \cite{claeys2009, clgr11}. One of the key steps for the
derivation of the universality conjecture obtained in \cite{du06} is the introduction, for a special subclass of perturbations, of a formal
identity, which in the KdV case takes the form 
\begin{align}\nonumber
x+u\,t=&g'+\e^{2}\left(\frac{1}{2}g^{(4)}u_{x}^{2}+g'''u_{xx}\right)+\e^{4}\left(\frac{3}{5}g^{(6)}u_{x}^{2}u_{xx}\right.\\ \label{int:kdvstring}
&\left.+\frac{9}{10}g^{(5)}u_{xx}^{2}+\frac{6}{5}g^{(5)}u_{x}u_{xxx}+\frac{3}{5}g^{(4)}u_{xxxx}-\frac{1}{24}g^{(7)}u_{x}^{4}\right)+O(\e^{6}) \; .
\end{align}
This equation clearly reminds of the characteristics equation \eqref{intro:hodo}: the aim of the present paper is to construct an analogue of
the above formal identity for a more general class of Hamiltonian perturbations, and to rigorously prove its range of validity. It turns
out that both problems are non-trivial. As an application, we use the new equation to explicitly find the first two corrections of the
semiclassical expansion of the solution to the perturbed PDE. It should be noted that in the case $g=u^{n}$, $n\in\mathbb{N}$,
then identity \eqref{int:kdvstring} reduces, modulo higher order terms, to the \emph{string equation} of the KdV hierarchy \cite{no90,mo90}. By analogy, and considering
that our procedure to obtain the equation follows a variational formulation, the new equation to be constructed will be  named
\lq \emph{variational string equation}\rq.\\

The paper is divided into two main parts, one more algebraic in nature, and a second one  containing analytic results.  After reviewing some
known facts on Hamiltonian perturbations in Section \ref{sec:dztheory}, we consider in Section \ref{sec:string} the construction of the variational
string equation. This is done according to the following scheme: first, we require the string equation to describe the stationary points of a certain
(approximate) conserved quantity, namely to be of the form
$$\du{S_{f}}=0,$$ 
for some suitable functional $S_{f}$. Besides this, at $\e=0$ we want to recover the characteristics equation \eqref{intro:hodo}, and at $t=0$ the
equation has to provide the desired initial data. While the latter two conditions are very natural, the motivation for the first one comes from
the following result, obtained by Lax in \cite{lax75}\footnote{An analogous result was proved independently by Novikov \cite{nov74},
by considering symmetries \\ rather than conserved quantities. In the Hamiltonian case the two approaches are equivalent.}: given a (nonlinear) evolutionary equation, together with a conserved quantity, then the space of stationary points of the conserved quantity is invariant under the flow. As shown in Section \ref{sec:dztheory}, equation \eqref{intro:hodo} fits in this picture; we require the variational string equation to be of the same nature.

As a byproduct of the above construction, we obtain a characterization of the Hamiltonian perturbations admitting a variational string equation: at order $O(\e^{2})$ the string equation exists for every perturbation, while at order $O(\e^{4})$ we find that a certain functional constraint has to be satisfied. On the other hand, if for a given perturbation a string equation exists (at some order), then it is unique.\\

In Section \ref{sec:solutions} we describe how to use the variational string equation in order to provide approximate solutions - for small times - to the perturbed equations. Given the solution $u$ of a Cauchy problem for an Hamiltonian perturbation, with $\e$-independent initial data, then at a formal level the coefficients $v^{i}, i\geq 1$, of the semiclassical expansion 
$$u=v^{0}+\e^{2}\,v^{1}+\e^{4}\,v^{2}+\dots$$
can be written in terms of the solution of the unperturbed equation $v^{0}$ (with same initial data). If the  Hamiltonian perturbation admits a string equation at order $2k$, then the first $k$ coefficients can be  explicitly found by elementary manipulations. This procedure was already used in our previous paper \cite{mara11}, where we wrote the result for the first coefficient $v^{1}$ without explaining the derivation.  In the present work we show the details of the proof,  and we compute the second term $v^{2}$.\\

Finally, Section \ref{sec:validity} is devoted to the rigorous proof of the validity of the variational string equation. We show that, given the solution $u(x,t,\e)$ of a Cauchy problem for a Hamiltonian perturbation which admits a variational string equation (at some order in $\e$), then there exists an integer $K$ and a domain in the $(x,t)$-plane where the variational string equation is satisfied up to order $\e^{K}$, that is:
$$\du{S_{f}}_{|u=u(x,t,\e)}=O(\e^{K}),$$
for $x$ and $t$ belonging to the domain. The number $K$ depends both on the regularity of the solution of the Cauchy problem and on the order in $\e$ of the string equation. The problem is non standard, for the functional $S_{f}$ is in general ill-defined and the variational string equation is defined only locally. We approach it by first showing that the variational derivative of a formal functional satisfies a \emph{linear} PDE with forcing term (this result, already interesting by itself, is a generalization of a result by Lax), and then describing in detail the domain of the variational string equation. After that, we are able to control the $L^{2}$-norm of $\du{S_{f}}$, uniformly in $x$ and $t$, and to show that it is small in $\e$ as required. Since the value of $\du{S_{f}}$ at the boundary of the domain is not known, an important ingredient of the proof is the introduction of a family of curves in the $(x,t)$-plane, suitable deformations of the characteristic lines.\\

The results of Section \ref{sec:validity} are valid provided the solution of the Cauchy problem is regular enough in the semiclassical limit.
Therefore, using the results we proved in  \cite{mara11}, they can be applied to Hamiltonian perturbations of generalized KdV type.
In Section \ref{sec:examples} we briefly review these results, and we prove that all the perturbative corrections of the Cauchy problem of KdV are
rational functions of the solution of the unperturbed equation and its derivatives.
This property, called quasi-triviality, was proved so far just for solutions of the unperturbed equation with never vanishing first derivative and for initial data with a
depending on $\e$ in a prescribed way \cite{du06,lizh06}.

\section{Hamiltonian perturbation of quasilinear PDEs}\label{sec:dztheory}
Here we briefly review the Dubrovin--Zhang construction \cite{duzh01} of Hamiltonian perturbations for quasilinear conservation laws.
We focus on the scalar case, which is treated in detail in \cite{du06}. In order to fix the notation, we recall first some well known
fact about the unperturbed case; we then consider the theory of Hamiltonian perturbations, and finally we discuss the problem of finding
approximate solutions (for small times) to the perturbed equations.

\subsection{Quasilinear conservation laws}
Consider the Cauchy problem for a scalar conservation law
\begin{equation}\label{eq:hopfgen}
v_{t}=a(v)\,v_{x},\qquad v(x,0)=\varphi(x)\, ,
\end{equation}
with $a(v)$ any non-constant smooth function. 
We look for classical solutions of (\ref{eq:hopfgen}), that is for a bounded differentiable function,
with bounded space derivative, such that its partial derivatives satisfy (\ref{eq:hopfgen}) identically
on $\bb{R}\times[0,t^{\ast}[$ for some positive time $t^{\ast}$. The standard way to construct solutions of the Cauchy problem is via the method
of characteristics. The characteristic lines $x(t)$ are defined by the equation 
\begin{equation}\label{eq:chargen}
\dot{x}(t)=- a(v(x(t),t))  \; ,
\end{equation}
and any classical solution is constant along these lines, which are mutually distinct provided $t^{\ast}\leq t_c$, where
 $$t_c=\inf_{\left\lbrace x | \varphi_x a'(\varphi)>0\right\rbrace}\frac{1}{\varphi_xa'(\varphi)} >0  \; .$$
The time $t_{c}$ is known as the \emph{critical time} and at that time, the solution is said to have a \emph{gradient catastrophe}: the solution
remains bounded while its derivatives blow up. Being constant along the characteristic lines, the solution of the Cauchy problem \eqref{eq:hopfgen}
can locally be described by the functional equation
\begin{equation}\label{eq:func}
x+t\,a(v)=g'(v)  \, ,
\end{equation}
where $g$ is a function obtained by the implicit equation $g'(\varphi(x))=x$.  Equation (\ref{eq:func})
admits a variational formulation, constructed in the following way:  it is well known that for any function $f$, the (formal)  functional $H^{0}_{f}=\int f(u)dx$ is conserved along the flow of
\eqref{eq:hopfgen}. In addition, one has that the family of functionals
$$Q_{\alpha,\beta,f}:=\int \left(x\,\alpha(v)+t\,\beta(v)-f(v)\right) dx \,,$$
is conserved, for any choice of $f$, provided the relation
$$\beta'(v)=\alpha'(v)\,a(v),$$
is satisfied. Choosing $f$ such that $f'=\alpha' \, g'$, then \eqref{eq:func} is equivalent to 
$$\frac{\delta Q_{\alpha,\beta,f}}{\delta v}=0,$$
provided $\alpha'$ does not vanish.  Note that in this unperturbed case one can always choose $\alpha(v)=v$, thus obtaining \eqref{eq:func} directly.

\subsection{Hamiltonian perturbations}
Let $h(u)$, $c(u)$, $p(u)$ and $s(u)$ be smooth functions of one variable, with $c$ not identically zero. Then, introduce the quantities
\begin{align*}
c_{h}&:=c\,h''',\\
p_{h}&:=p\,h'''+\frac{3}{10}\,c^{2}\,h^{(4)},\\
s_{h}&:=s\,h'''-\frac{c\,c''}{8}\,h^{(4)}-\frac{c\,c'}{8}\,h^{(5)}-\frac{c^{2}}{24}h^{(6)}-\frac{p'}{6}h^{(4)}-\frac{p}{6}h^{(5)},
\end{align*}
and consider the linear map
defined by
\begin{gather}
D_{(c,p,s)}(h):=h-\frac{\e^2}{2}\,c_{h}(u)\,u_{x}^2+
\e^4\Big(p_{h}(u)\, u_{xx}^2+s_{h}(u)\,u_{x}^4 \Big),\label{conden}
\end{gather}
where $\e$ is a real parameter. The map \eqref{conden} has been introduced by Dubrovin in \cite{dub08}, where he called it \lq $D-$operator\rq.
Note that when applied to a quadratic function, \eqref{conden} reduces to the identity map. We therefore assume that $h'''\neq 0$, and we
introduce the local functional 
\beq\label{ham}
H_{h}:=\int D_{(c,p,s)}(h)\,dx,
\eeq
the \emph{Hamiltonian} of our system. The corresponding evolution equation, that is, the fifth order PDE given by
$$u_{t}=\left\{u,H\right\}=\de_{x}\left(\frac{\delta H_{h}}{\delta u}\right)$$
is a \emph{Hamiltonian perturbation} of the quasilinear PDE \eqref{eq:hopfgen}. Explicitly, we have:
\begin{align}
u_{t}&=h''\,u_{x}+\e^{2}\,\de_{x}\!\left(\frac{1}{2}c_{h}'u_{x}^{2}+c_{h}\,u_{xx}\right)+\e^{4}\de_{x}\bigg(2p_{h}u_{xxxx}\notag\\
&+4p'_{h}u_{x}u_{xxx}+2(p''_{h}-6s_{h})u_{x}^{2}u_{xx}+3p'_{h}u_{xx}^{2}-3s'_{h}u_{x}^{4}\bigg).\label{eveq}
\end{align}
\begin{pro}[Dubrovin, \cite{du06}] The PDE \eqref{eveq} admits infinitely many approximated conserved quantities (up to order $\e^{6}$), of the form
\beq\label{def:cq}
H_{f}=\int D_{(c,p,s)}(f)\,dx,
\eeq 
parametrized by an arbitrary function of one variable. 
\end{pro}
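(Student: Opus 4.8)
The plan is to prove that $H_f = \int D_{(c,p,s)}(f)\,dx$ is conserved up to order $\e^6$ along the flow \eqref{eveq} for arbitrary $f$. By definition, $H_f$ is conserved modulo $O(\e^6)$ precisely when its time derivative $\frac{d}{dt}H_f = \{H_f, H_h\}$ vanishes to that order. So the natural first step is to write out the Poisson bracket $\{H_f, H_h\}$ using \eqref{deltapb}, expand both $D_{(c,p,s)}(f)$ and $D_{(c,p,s)}(h)$ order by order in $\e$, and verify the vanishing degree by degree. The key structural fact I would exploit is that $\{H_f, H_h\}$ is antisymmetric, so the $O(\e^0)$ term is $\{H^0_f, H^0_h\} = \int f'(u)\,\de_x(h'(u))\,dx = 0$ automatically (it is a total $x$-derivative, being $\int \de_x(\,\cdot\,)$ of a function of $u$ alone), and similarly every contribution that is a perfect $x$-derivative of a local density integrates to zero.

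The main organizational idea is to separate the bracket by $\e$-order. At $O(\e^2)$ I would collect the cross terms $\{H^0_f, H^2_h\} + \{H^2_f, H^0_h\}$, where $H^2_h = -\tfrac12\int c_h u_x^2\,dx$ and similarly for $f$; the claim is that these cancel. Because the whole construction is engineered around the $D$-operator, I expect this cancellation to follow from an algebraic identity relating $c_f = c\,f'''$ and $c_h = c\,h'''$ — essentially the statement that the leading perturbation preserves the compatibility relation $\beta' = \alpha' a$ familiar from the unperturbed variational formulation in the previous subsection. At $O(\e^4)$ the analysis is heavier: one collects $\{H^0_f, H^4_h\} + \{H^4_f, H^0_h\} + \{H^2_f, H^2_h\}$ and must show the total is a perfect $x$-derivative (hence integrates to zero). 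This is exactly where the specific coefficients $\tfrac{3}{10}$, $\tfrac18$, $\tfrac{1}{24}$, $\tfrac16$ appearing in $p_h$ and $s_h$ are forced: they are precisely tuned so that the $O(\e^4)$ obstruction is a total derivative for every pair $(f,h)$.

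The computational engine throughout is integration by parts: a density polynomial in $u_x, u_{xx}, \dots$ integrates to zero iff it lies in the image of $\de_x$ on local densities, equivalently iff its image under the Euler--Lagrange operator $\mathcal{E}$ from \eqref{eulerlagrange} vanishes. So I would recast each order's obstruction as a variational density and apply $\mathcal{E}$, reducing the verification to a finite polynomial identity in $u_x, u_{xx}, \dots$ and in the derivatives of $c, p, s, f, h$. Since the densities at order $\e^4$ are homogeneous of degree $4$ in the derivatives, only finitely many monomial types ($u_x^4$, $u_x^2 u_{xx}$, $u_{xx}^2$, $u_x u_{xxx}$, $u_{xxxx}$) can appear, so matching coefficients is a bounded check.

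The hard part will be the $O(\e^4)$ cancellation: unlike the $O(\e^2)$ case, it mixes the quartic cross-bracket $\{H^2_f, H^2_h\}$ with the genuinely fourth-order terms, and the bilinear symmetry between $f$ and $h$ must be handled carefully — one has to confirm that the asymmetric-looking contributions reorganize into a symmetric total derivative. I expect the proof to hinge on recognizing the combination as $\de_x$ of an explicit density built from $c, p, s$ and the two functions $f, h$, and the cleanest route is to simply exhibit that density (or equivalently show $\mathcal{E}$ annihilates the obstruction) rather than to invoke an abstract argument. Because this is Dubrovin's result quoted from \cite{du06}, I would also expect the proof to lean on the classification of admissible $(c,p,s)$ already established there, so that the identity is known to close rather than needing to be discovered from scratch.
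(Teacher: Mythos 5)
You should first note that the paper contains no proof of this proposition: it is quoted from Dubrovin \cite{du06}, and the paper's only gloss is the equivalent restatement that $\mathcal{E}\left(\frac{\delta H_{f}}{\delta u}\frac{d}{dx}\frac{\delta H_{g}}{\delta u}\right)=O(\e^{6})$ for all $f,g$ --- which is precisely the verification criterion you arrive at. Your plan (expand the bracket order by order in $\e$, and at each order test whether the obstruction density is a total $x$-derivative by applying the Euler--Lagrange operator, reducing everything to matching coefficients of finitely many jet monomials) is therefore the right, and essentially the standard, route; it is how the result is established by direct computation in \cite{du06}, with the universal coefficients $\tfrac{3}{10}$, $\tfrac18$, $\tfrac{1}{24}$, $\tfrac16$ in $p_h$ and $s_h$ doing exactly the job you attribute to them. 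Two small corrections are in order. First, your degree count is off by one: the bracket density at order $\e^{4}$ is homogeneous of degree $5$, not $4$, in the jet variables (the outer $\frac{d}{dx}$ raises the degree by one), so the relevant monomial types are $u_x^5$, $u_x^3u_{xx}$, $u_xu_{xx}^2$, $u_x^2u_{xxx}$, $u_{xx}u_{xxx}$, $u_xu_{xxxx}$ and $u^{(5)}$; the check is still finite, as you say. Second, your closing expectation that the proof leans on a classification of admissible $(c,p,s)$ is misplaced: the proposition holds for \emph{arbitrary} $c,p,s$ (with $c\not\equiv 0$), no functional constraint being needed at this order --- the genuine constraint \eqref{rel:cp} on $p$ arises only later in the paper, for the special $x,t$-dependent conserved quantity $S_0$, not for the family $H_f$. (Also, antisymmetry of the bracket by itself gives nothing when $f\neq h$; your actual $O(1)$ argument, that $f'(u)h''(u)u_x$ is a perfect derivative, is the correct one.)
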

The above proposition can also be stated in the following way: for every $f,\,g\in \mathcal{C}^{\infty}(\mathbb{R})$ the corresponding
functionals $H_{f}$, $H_{g}$ satisfy:
$$\mathcal{E}\left(\frac{\delta H_{f}}{\delta u}\frac{d}{dx}\frac{\delta H_{g}}{\delta u}\right)=O(\e^{6}),$$
where $\mathcal{E}$ is the Euler-Lagrange operator defined in \eqref{eulerlagrange}. 

\begin{exa}
The KdV equation,
$$u_{t}=u\,u_{x}+\e^{2}\,u_{xxx},$$
can be obtained from the above class by choosing $h(u)=\frac{u^{3}}{6}$, $c(u)=1$ and $p(u)=s(u)=0$.  Specializing to
this case, we obtain that the generic $O(\e^{4})-$conserved quantity for KdV is given by \eqref{def:cq} with
$$D_{KdV}(f)=f-\frac{\e^2}{2}\,f'''\,u_{x}^2+
\e^4\left(\frac{3}{10}\,f^{(4)}\, u_{xx}^2-\frac{1}{24}f^{(6)}\,u_{x}^4\right)$$
and $f(u)$ an arbitrary function.
\end{exa}
The functions $c$, $p$ and $s$ appearing above characterize a hierarchy of Hamiltonian PDEs, which commute up to order $\e^{6}$.
Every equation of the hierarchy is generated by an Hamiltonian \eqref{ham} for a specific choice of $h$.  As proved by Dubrovin in
\cite{du06}, the functions $c$ and $p$ (but not $s$) are invariants of the hierarchy with respect to Miura-type transformations of the form
\beq\label{miurapoisson}
u\longmapsto w=u+\sum_{k\geq 1} \e^{k}\,F_{k}(u;u_{x},\dots,u^{(k)}),
\eeq
with $F_{k}$ polynomials in the derivatives of $u$, of total degree $k$.  Equivalently, the class of Hamiltonian perturbations
modulo Miura-type transformations \eqref{miurapoisson} is parametri-zed -- up to order $\e^{4}$ -- by three arbitrary functions of one variable:
$h$, $c$ and $p$, while the function $s$ can be chosen at convenience.  In this paper, we present our the results for an arbitrary choice of the function $s$. Indeed, as shown
for instance in the example below, many important examples of Hamiltonian perturbations are obtained selecting different choices of $s$.

\begin{exa}\label{ourexample}
The following class of generalized KdV equations
$$u_{t}=a(u)\,u_{x}+\e^{2}\,\kappa_{1}\,u_{xxx}+\e^{4}\,\kappa_{2}\,u_{xxxxx},$$
where $a$ is an arbitrary smooth function, $\kappa_{1}$, $\kappa_{2}$ arbitrary constants, can be written as Hamiltonian perturbations by choosing
\begin{gather*}
h''=a,\quad c=\frac{\kappa_{1}}{a'},\quad
p=\frac{\kappa_{2}}{2\,a'}-\frac{3\,\kappa_{1}^{2}}{10}\frac{a''}{(a')^{3}},\\
\nonumber
s=\kappa_{1}^{2}\,\left(\frac{2}{5}\,\frac{(a'')^{3}}{(a')^{5}} -\frac
{7}{20}\frac{a''\,a'''}{(a')^{4}}
+\frac{1}{24}\frac{a''''}{(a')^{3}}\right)
-\frac{\kappa_{2}}{12}\left(\frac{(a'')^{2}}{(a')^{3}}-\frac{a'''}{(a')^{2}}\right).
\end{gather*}
\end{exa}

\subsection{Solutions to the perturbed equation for small times}
Let us consider now solutions of a perturbed equation \eqref{eveq}. The discussion here will be at a formal level,
for in general (that is, for an arbitrary perturbation)  it is not known whether the equation admits a global -- or even local --
solution for a given initial data.  We consider the Cauchy problem
\beq\label{cauchypert}
u_{t}=\left\{u,H_{h}\right\},\qquad u_{|t=0}=\varphi,
\eeq
where the Hamiltonian $H_{h}$ is given by \eqref{ham}, and the initial data $\varphi$, which is $\e$-independent, is the same as
for the unperturbed equation \eqref{eq:hopfgen}. A standard method to obtain approximate solutions to \eqref{cauchypert} is by considering a semiclassical expansion in the small parameter $\e$, which in our case reads as follows: consider the Ansatz
\begin{equation}\label{expu}
u(x,t,\e)=\sum_{i\geq 0} v^{i}(x,t)\,\e^{\,2i},
\end{equation}
with coefficients  $v^{i}$ smooth functions of $x$ and $t$. Within this setting, all identities are understood in the sense of
\emph{formal power series} in $\e$ - they are assumed to hold identically at every order in $\e$.  In particular, by evaluating
\eqref{expu} at $t=0$, one gets the relations
$$v^{0}(x,0)=\varphi(x),\qquad v^{i}(x,0)=0,\,\, i\geq 1.$$
By expanding both sides of equation \eqref{eveq} according
to the Ansatz \eqref{expu}, in first approximation one obtains
\begin{equation}\label{eq:vhopf} 
v^{0}_{t}=h''(v^{0})\,v^{0}_{x},\qquad u_{|t=0}=\varphi,
\end{equation}
which says that $v^{0}$ must be a solution of the Cauchy problem \eqref{eq:hopfgen}.  Accordingly, from the higher order coefficients one obtains an infinite set of \emph{semilinear} equations (or \emph{transport equations}) for the coefficients $v^{k}(x,t)$. For instance, the equation for $v^{1}$ turns out to be
\begin{equation}\label{eq:tr1}
v^{1}_{t}=\partial_{x}\!\left(h''(v^{0}) \,v^{1}+c_{h}(v^{0})\, v^{0}_{xx}+\!\frac{1}{2}c'_{h}(v^{0}) \left(v^{0}_{x}\right) ^{2}\right), \quad v^1|_{t=0}=0,
\end{equation}
while the equation for $v^{2}$ is
\begin{align}
v^{2}_{t}=\partial_{x}&\bigg(h'' v^{2}+\frac{1}{2}h'''(v^{1})^{2}+\de_{x}\left(c_{h}v^{1}_{x}\right)+c'_{h}v^{0}_{xx}v^{1}+\frac{1}{2}c''_{h}(v^{0}_{x})^{2}v^{1}+2p_{h}v^{0}_{xxxx}\notag\\
&+4p'_{h}v^{0}_{x}v^{0}_{xxx}+2p''_{h}(v^{0}_{x})^{2}v^{0}_{xx}+3p'_{h}(v^{0}_{x})^{3}-12s_{h}(v^{0}_{x})^{2}v^{0}_{xx}-3s'_{h}(v^{0}_{x})^{4}\bigg), \label{eq:tr2}
\end{align}
with the initial condition $v^2|_{t=0}=0$. Here, the functions $h$, $c_{h}$, $p_{h}$, $s_{h}$ and their derivatives are evaluated at $u=v^{0}(x,t)$. Note that, although in principle every transport equation can be solved recursively
starting from the solution of \eqref{eq:vhopf}, the use of a standard method for finding the solution requires a case-by-case approach, which can be very difficult to implement, even in simple cases.
\begin{rem}
It is reasonable to expect that expansion \eqref{expu} holds true in the space-time region where the solution of the Hopf equation (\ref{eq:vhopf})
obtained by the method of characteristics is single-valued. This region can be described as the complement of the so-called \lq Whitham zone' \cite{whi74,avno87}. For the class of perturbations of Example \ref{ourexample}, and for times smaller that the critical time, we recently proved in \cite{mara11} that an expansion of the form \eqref{expu} actually exists at some order in $\e$, provided the initial data is sufficiently regular. In the full complement of the Whitham zone, namely also for times $t\geq t_c$, the existence of expansion \eqref{expu} was proved just for special classes of solutions of the KdV equation and its hierarchy - see \cite{clgr11} for a review. 
\end{rem}

%

An alternative way to study solutions of \eqref{cauchypert} for small times has been proposed within the theory of Hamiltonian perturbations in \cite{du06, lizh06}. The idea is to look for a specific \emph{(quasi-)Miura} transformation, mapping any solution of the unperturbed equation, say $v^{0}(x,t)$,  to a solution $u(x,t)$ of the perturbed equation. Remarkably, this is always possible for a $\e^{4}-$perturbation: the transformation is of the form 
\begin{equation}\label{cantrans}
v^{0}\longmapsto u=v^{0}+\e\,\left\{v^{0}(x),K\right\}+ \e^2\left\{\left\{v^{0}(x),K\right\},K\right\}+\dots
\end{equation}
where the functional $K$, up to order $4$ in $\e$, is given by
\begin{equation}\label{borisfunctional}
\!\!\!\!\!K=-\!\!\int\!\!\left[\e\frac{c(v^{0})}{2}\,v^{0}_{x}\log{v^{0}_{x}}+\e^{3}\!\!\left(\frac{c(v^{0})^{2}}{40}\!
\left(\frac{v^{0}_{xx}}{v^{0}_{x}}\right)^{\!\!3}\!\!-\frac{p(v^{0})}{4}\frac{\left(v^{0}_{xx}\right)^{\!2}}{v^{0}_{x}}\right)\!\right]\!\!dx.
\end{equation}
Although this transformation turns out to be very useful when considering algebraic and geometric aspects of the theory of Hamiltonian perturbations
 \cite{du06,liwuzh08}, if one is interested in (approximate) solutions of  \eqref{eveq}, then the above formula has some drawback.
Indeed, it is clear that  (\ref{cantrans}, \ref{borisfunctional}) is singular at critical points of solutions to the equation \eqref{eq:hopfgen};
moreover, the required initial condition $u_{|t=0}=\varphi$ is not satisfied.

In order to eliminate these problems, we construct in the next section a suitable deformation, to the case of Hamiltonian perturbations,
of the characteristic equation \eqref{eq:func}, and we show how to use it in order to explicitly compute the correct terms of the semiclassical expansion.

\section{The variational string equation}\label{sec:string}
The aim of the present section is to construct an analogue of equation \eqref{eq:func} for Hamiltonian perturbations of type \eqref{eveq}. The main idea we follow is to suitably generalize the variational formulation of the method of characteristics, outlined in Section \ref{sec:dztheory}, to the perturbed case. Remarkably, it turns out that this is always possible at order $\e^{2}$, while at order $\e^{4}$ nontrivial constraints appear. There are three properties that we require in order to obtain a meaningful deformation: first, the new equation has to be written as the critical point of an (almost) conserved quantity for the perturbed equation. Then, at $\e=0$ we want to recover  the method of characteristics, and finally, in the limit as $t$ tends to zero 
we want to get the correct initial data.  If $\varphi$ is the initial data of \eqref{eveq}, $g$ a function such that $g'(\varphi(x))=x$, and $S$ denotes the functional we want to find, then these conditions can be written as
\begin{subequations}\label{2st}
\begin{align}
&\partial_t \left(\du{S}\right) + \mathcal{E}\left(\du{S}\frac{d}{dx} \du{H_{h}}\right) = O(\e^{K}),  \qquad \mbox{ for some } K >0, \label{2stcomm}\\
& \du{S}_{|\e=0} = 0 \quad  \Longleftrightarrow \quad  x+ h''(u)\,t = g'(u), \label{2ste0} \\ 
&\du{S}_{|t=0, u=\varphi(x)} = 0.  \label{2stt0} 
\end{align}
\end{subequations}
Maybe only equation \eqref{2stcomm} requires some comment. If $S$ is an exact conserved quantity, explicitly depending on $t$, for equation \eqref{eveq}, then  the right hand side of \eqref{2stcomm} is exactly equal to zero. Since in this case we expect to obtain only an approximate conserved quantity, then \eqref{2stcomm} tells us that the critical locus of $S$ varies, at least formally, slowly: we can expect the solution of the Cauchy problem to satisfy the critical point equation up to an error $O(\e^K)$, uniformly in $x,t$ on some space-time domain.

Given a Cauchy problem for a perturbation \eqref{eveq}, if a special conserved quantity $S$ satisfying conditions \eqref{2st} is found, at some order in $\e$, then the required analogue of the characteristic equation \eqref{eq:func} is given by
\beq\label{se}
\frac{\delta S}{\delta u}=0.
\eeq
We call this equation \emph{variational string equation}. In the special case when the perturbed equation \eqref{eveq} is given by the choice $c(u)=c_{0}, p(u)=p_{0}, s(u)=0$, with $c_{0}$ and $p_{0}$ constants, a possible candidate appears in the Dubrovin's paper \cite{du06}, where he considered the functional
\beq\label{borisconstfunc}
S=\int\Big(x\,u + t\,D_{c_{0},p_{0},0}(h')-D_{c_{0},p_{0},0}(g)\Big) dx,
\eeq
in relation with the universality conjecture of solutions at the critical point.
Note, however, that the above functional satisfies condition \eqref{2stt0} at order $O(\e^{2})$ only. 

In the following, we generalize the functional \eqref{borisconstfunc} to a more general class of perturbations. Note that this quantity is the sum of two separate conserved quantities:
\beq\label{s0const}
S_{0}=\int\Big(x\,u + t\,D_{c_{0},p_{0},0}(h')\Big) dx,
\eeq
and
$$H_{g}=\int D_{c_{0},p_{0},0}(g) dx.$$
We will first consider an extension of the former, requiring only \eqref{2stcomm} and \eqref{2ste0} to be fulfilled.
An appropriate choice of the second conserved quantity will then provide the correct initial data \eqref{2stt0}.

\begin{rem} The reason of the name \lq variational string equation\rq\, is due to the fact that in a particular case identity 
\eqref{se} reduces to the \lq string equation\rq\, associated to the KdV hierarchy \cite{no90,mo90}.
Let us consider the critical point of the following KdV exact conserved quantity
$$
S=\int \left(xu +t\frac{u^2}{2}\right) dx - H_n,
$$
where $H_{n}$ is the n-th Hamiltonian of KdV hierarchy (the deformation of  $\int u^n dx$,  conserved
quantity of the Hopf equation).
The variational string equation for this functional reads
$$
x+ut=\du{H_n} \; .
$$
Note that this is an exact identity. Differentiating the last equation with respect to $x$, we get
$$
1=-t\,u_{x}+\frac{d}{dx}\du{H_n}=-\left[ L,P\right], \qquad P=-(L^{\frac{2n+1}{2}})_+ + t\, (L^{\frac{1}{2}})_+
$$
where $L=-\e^2\frac{d^2}{dx^2}+u$ is the Lax operator of the KdV hierarchy, and $(L^{\frac{2n+1}{2}})_+$ is the differential part
of the pseudo-differential operator $L^{\frac{2n+1}{2}}$. The equation $[P,L]=1$ (after a suitable rescaling $x \to \frac{x}{\e}$) is
known in the literature as string equation.
\end{rem}

\subsection{A special conserved quantity}\label{sub:special}
Consider a family of functionals, depending on $t$, of the form
\beq\label{stringcq}
\int \Big(x\,\rho(u; u_{x},u_{xx},\dots,\e)+t\,\mu(u; u_{x},u_{xx},\dots,\e)\Big)\,dx,
\eeq
for certain differential polynomials
\begin{align*}
\rho=&\rho_{0}+\e^{2}\left(\rho_{1}u_{x}^{2}+\rho_{2}u_{xx}\right)+\e^{4}\left(\rho_{3}u_{x}^{4}+\rho_{4}u_{xx}^{2}+\rho_{5}u_{x}^{2}u_{xx}+\rho_{6}u_{x}u_{xxx}+\rho_{7}u_{xxxx}\right),\\
\mu=&\mu_{0}+\e^{2}\left(\mu_{1}u_{x}^{2}+\mu_{2}u_{xx}\right)+\e^{4}\left(\mu_{3}u_{x}^{4}+\mu_{4}u_{x}^{2}u_{xx}+\mu_{5}u_{xx}^{2}+\mu_{6}u_{x}u_{xxx}+\mu_{7}u_{xxxx}\right),
\end{align*}
where the $\rho_{i}$ and $\mu_{i}$ are arbitrary functions of $u$.  We want to determine conditions on these functions in order to obtain a conserved quantity -- up to order $O(\e^{6})$ -- for equation \eqref{eveq}. As a first step, we  write this functional in normal form, that is up to exact derivatives:
\begin{lem}
Every functional of type \eqref{stringcq} can be written in a unique way in the form 
\beq\label{stringcqnor}
\int \Big(x\,\alpha(u; u_{x},\dots,\e)\,+\e^{4}\,e(u)\,u_{x}^{3}+t\,\beta(u; u_{x},\dots,\e)\Big)\,dx,
\eeq
where
\begin{align*}
\alpha=&\alpha_{0}+\e^{2}\alpha_{1}\,u_{x}^{2}+\e^{4}\left(\alpha_{2}\,u_{x}^{4}+\alpha_{3}\,u_{xx}^{2}\right),\\
\beta=&\beta_{0}+\e^{2}\beta_{1}\,u_{x}^{2}+\e^{4}\left(\beta_{2}\,u_{x}^{4}+\beta_{3}\,u_{xx}^{2}\right),
\end{align*}
and $\alpha_{i}(u)$, $\beta_{i}(u)$ and $e(u)$ are functions of one variable.
\end{lem}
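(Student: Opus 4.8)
The plan is to obtain both existence and uniqueness of the normal form \eqref{stringcqnor} by working modulo total $x$-derivatives, that is, by integration by parts under the integral sign. The whole point is to reduce every differential monomial of degree $4$ to the two ``irreducible'' ones $u_{x}^{4}$ and $u_{xx}^{2}$ (and every degree-$2$ monomial to $u_{x}^{2}$), using the elementary identities coming from $\frac{d}{dx}\big(e(u)u_{x}^{3}\big)$, $\frac{d}{dx}\big(e(u)u_{x}u_{xx}\big)$, $\frac{d}{dx}\big(e(u)u_{xxx}\big)$ and $\frac{d}{dx}\big(e(u)u_{x}\big)$. These let one trade $u_{xxxx}$ for $u_{x}u_{xxx}$, then $u_{x}u_{xxx}$ for $u_{x}^{2}u_{xx}$ and $u_{xx}^{2}$, then $u_{x}^{2}u_{xx}$ for $u_{x}^{4}$, and $u_{xx}$ for $u_{x}^{2}$, each time up to a total derivative.

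For the term $t\,\mu$ the factor $t$ is constant in $x$, so these reductions apply verbatim and turn $\mu$ into a density $\beta_{0}+\e^{2}\beta_{1}u_{x}^{2}+\e^{4}(\beta_{2}u_{x}^{4}+\beta_{3}u_{xx}^{2})=\beta$, with no leftover terms. For the term $x\,\rho$ the same computation is carried out, but now each integration by parts produces, besides the reduced monomial, a contribution in which the derivative has fallen on the explicit factor $x$; by the relation $\int x\,\frac{dG}{dx}\,dx=-\int G\,dx$ this contribution is the integral of a differential polynomial of degree one \emph{lower}, carrying no factor $x$. At order $\e^{2}$ the reduction of $x\,\rho_{2}u_{xx}$ produces such a leftover of degree $1$, which is a total derivative and hence vanishes; this is why no pure $\e^{2}$ term survives. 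At order $\e^{4}$ the reductions of $x\,\rho_{5}u_{x}^{2}u_{xx}$, $x\,\rho_{6}u_{x}u_{xxx}$ and $x\,\rho_{7}u_{xxxx}$ produce leftovers of degree $3$, which reduce (again up to a total derivative, and now without reintroducing the factor $x$) to a single multiple of $u_{x}^{3}$. Collecting these gives precisely the term $\e^{4}e(u)u_{x}^{3}$, establishing existence.

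For uniqueness I would show that the monomials appearing in \eqref{stringcqnor} are independent modulo total derivatives, within the class of densities at most linear in $x$ and in $t$. A total derivative in this class has the form $\frac{d}{dx}\big(xA+tC+B\big)=x\,A_{x}+\big(A+B_{x}\big)+t\,C_{x}$, with $A,B,C$ differential polynomials in $u$. If a normal-form density is such a total derivative, comparing the coefficient of $t$ gives $\beta=C_{x}$, comparing the coefficient of $x$ gives $\alpha=A_{x}$, and the remaining part gives $\e^{4}e(u)u_{x}^{3}=A+B_{x}$. The key rigidity fact, checked degree by degree, is that a total $x$-derivative of a differential polynomial contains no term of degree $0$, and cannot equal $b_{1}(u)u_{x}^{2}$, nor $b_{0}(u)u_{x}^{3}$, nor $b_{2}(u)u_{x}^{4}+b_{3}(u)u_{xx}^{2}$ unless it vanishes identically. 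Applying this to $\beta=C_{x}$ forces every $\beta_{i}=0$; applying it to $\alpha=A_{x}$ forces every $\alpha_{i}=0$ and hence $A_{x}=0$; the remaining relation then makes $\e^{4}e(u)u_{x}^{3}$ a total derivative, whence $e=0$. Thus all coefficients are determined and the representation is unique.

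The main obstacle, and the only genuinely non-routine point, is the coupling between the $x$-weighted part and the extra term $\e^{4}e(u)u_{x}^{3}$. In the existence proof it forces one to track the degree of each leftover produced when a derivative lands on the factor $x$: only the odd, degree-$3$ leftovers survive as genuine functionals, the degree-$1$ ones being total derivatives, so the normal form must contain exactly one extra monomial $u_{x}^{3}$, and at exactly order $\e^{4}$. In the uniqueness proof the same coupling prevents a naive separation of variables, since $\alpha$ and $e$ cannot be read off independently; disentangling them relies precisely on the rigidity statement that a total $x$-derivative of a differential polynomial can never reproduce the irreducible monomials $u_{x}^{2}$, $u_{x}^{3}$, $u_{x}^{4}$, $u_{xx}^{2}$. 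Everything else is bookkeeping of the standard integration-by-parts identities.
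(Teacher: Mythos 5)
Your proof is correct and follows the same route as the paper, whose entire proof reads ``a direct computation, using integration by parts'': your reduction of the degree-$2$ and degree-$4$ monomials modulo exact derivatives, with the degree-$3$ leftovers from derivatives hitting the explicit factor $x$ collecting into the single term $\e^{4}e(u)u_{x}^{3}$, is exactly that computation made explicit. Your uniqueness argument --- restricting to total derivatives of densities $xA+tC+B$ with $A,B,C$ differential polynomials and checking degree by degree that no nonzero combination of $1$, $u_{x}^{2}$, $u_{x}^{3}$, $u_{x}^{4}$, $u_{xx}^{2}$ is exact --- correctly supplies the detail the paper leaves implicit.
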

\begin{proof} A direct computation, using integration by parts.
\end{proof}
\begin{pro}\label{specialconserved2}
For arbitrary functions $h, c, p$ and $s$, with $c$ not identically zero, equation \eqref{eveq} admits a conserved quantity of order $O(\e^{2})$ of type \eqref{stringcqnor}.
This is specified by choosing $\alpha=D_{(c,p,s)}(r)$ and $\beta=D_{(c,p,s)}(q)$ in  \eqref{stringcqnor}, where the functions $r(u)$ and $q(u)$ satisfy
\beq\label{rcqhc}
r'=\frac{1}{c},\qquad q'=\frac{h''}{c}.
\eeq
The $O(\e^{2})$-conserved quantity is unique up to a multiplicative constant.
\end{pro}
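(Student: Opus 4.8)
The plan is to impose the approximate conservation \eqref{2stcomm} on a functional already put in the normal form \eqref{stringcqnor} by the preceding Lemma, and to extract \eqref{rcqhc} from it. Writing $S=\int\big(x\,\alpha+t\,\beta+\e^{4}e\,u_{x}^{3}\big)dx$ and applying the Euler--Lagrange operator, the explicit weight $x$ produces the splitting $\mathcal{E}(x\,F)=x\,\mathcal{E}(F)+\mathcal{E}^{(1)}(F)$, where $\mathcal{E}^{(1)}(F)=\sum_{k\geq1}(-1)^{k}k\,\frac{d^{k-1}}{dx^{k-1}}\frac{\de F}{\de u^{(k)}}$ is the associated first higher Euler operator. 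Hence
$$\du{S}=x\,\mathcal{E}(\alpha)+t\,\mathcal{E}(\beta)+\mathcal{E}^{(1)}(\alpha)+\e^{4}\mathcal{E}(e\,u_{x}^{3}),$$
and I would substitute this into the left-hand side of \eqref{2stcomm}, treating $x$ and $t$ as free parameters and recalling that $\mathcal{E}(\alpha)=\du{H_{r}}$, $\mathcal{E}(\beta)=\du{H_{q}}$ when $\alpha=D_{(c,p,s)}(r)$, $\beta=D_{(c,p,s)}(q)$.

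The key structural observation is that, since $\partial_{t}$ in \eqref{2stcomm} is the explicit time derivative, the left-hand side is affine in $x$ and $t$, so it is $O(\e^{K})$ if and only if its three coefficients are. The coefficient of $x$ is $\mathcal{E}\big(\du{H_{r}}\tfrac{d}{dx}\du{H_{h}}\big)$ and the coefficient of $t$ is $\mathcal{E}\big(\du{H_{q}}\tfrac{d}{dx}\du{H_{h}}\big)$, both of which are $O(\e^{6})$ by the previous Proposition, since $H_{r}$ and $H_{q}$ are approximate conserved quantities. Thus the two weighted coefficients impose nothing, and this is exactly what singles out the ansatz $\alpha=D_{(c,p,s)}(r)$, $\beta=D_{(c,p,s)}(q)$. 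The whole content then sits in the remaining, $x,t$-independent coefficient
$$\mathcal{E}(\beta)+\mathcal{E}^{(1)}\!\big(\mathcal{E}(\alpha)\tfrac{d}{dx}\du{H_{h}}\big)+\mathcal{E}\big((\mathcal{E}^{(1)}(\alpha)+\e^{4}\mathcal{E}(e\,u_{x}^{3}))\tfrac{d}{dx}\du{H_{h}}\big)=O(\e^{K}),$$
which must vanish pointwise, and which will produce \eqref{rcqhc}.

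The genuine work is to verify this last identity order by order, and I expect it to be the main obstacle. At order $\e^{0}$ one finds $\mathcal{E}(\beta)^{(0)}=q'$ and $\mathcal{E}^{(1)}\big(\mathcal{E}(\alpha)\tfrac{d}{dx}\du{H_{h}}\big)^{(0)}=-r'h''$, so the bulk coefficient vanishes exactly when $q'=h''r'$. At order $\e^{2}$ the bulk coefficient is a homogeneous differential polynomial of degree two, hence of the form $P(u)\,u_{x}^{2}+Q(u)\,u_{xx}$; since it must vanish identically (not merely under the integral sign), both $P$ and $Q$ must vanish. Carrying out the expansion — using $D_{(c,p,s)}(r)=r-\tfrac{\e^{2}}2 c_{r}u_{x}^{2}+O(\e^{4})$ with $c_{r}=c\,r'''$, the dispersive part $\de_{x}\big(\tfrac12 c_{h}'u_{x}^{2}+c_{h}u_{xx}\big)$ of the flow \eqref{eveq}, and the explicit action of $\mathcal{E}$ and $\mathcal{E}^{(1)}$ — turns $P=0$ and $Q=0$ into two ordinary differential relations among $r$, $h$ and $c$. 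The delicate point is to check that these two relations are compatible and that their common solution is precisely $r'=1/c$ (so that, with $q'=h''r'$, one recovers \eqref{rcqhc}); the free function $e$ enters only at order $\e^{4}$ and is immaterial here.

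For uniqueness I would run the argument backwards. The vanishing of the $x$- and $t$-coefficients says that $\int\alpha\,dx$ and $\int\beta\,dx$ commute with $H_{h}$ up to $O(\e^{6})$, so by Dubrovin's classification of such conserved quantities \cite{du06} they must coincide, modulo exact $x$-derivatives, with $\int D_{(c,p,s)}(r)\,dx$ and $\int D_{(c,p,s)}(q)\,dx$; the normal form of the Lemma then removes the exact-derivative ambiguity and forces $\alpha=D_{(c,p,s)}(r)$, $\beta=D_{(c,p,s)}(q)$. The bulk equation at orders $\e^{0}$ and $\e^{2}$ next fixes $q'=h''r'$ and $r'\propto 1/c$, the proportionality constant being nothing but the overall rescaling $S\mapsto\lambda S$ (additive constants in $r,q$ being immaterial). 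Normalising this constant to one yields \eqref{rcqhc}, establishing both existence and uniqueness up to a multiplicative constant.
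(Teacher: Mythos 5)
Your proposal is correct and reaches the statement by what is at bottom the same order-by-order imposition of conservation that the paper dispatches as ``a direct calculation''; the value you add is organizational. You make explicit the splitting $\mathcal{E}(x\,F)=x\,\mathcal{E}(F)+\mathcal{E}^{(1)}(F)$ and the affine dependence on $x,t$, so that the two weighted coefficients only demand that $\int\alpha\,dx$ and $\int\beta\,dx$ approximately commute with $H_{h}$ — which, by Dubrovin's classification, pins down $\alpha=D_{(c,p,s)}(r)$, $\beta=D_{(c,p,s)}(q)$ in the normal form of the Lemma; the paper obtains the same constraint directly, as the conditions $\alpha_{1}=-\frac{1}{2}c\,r'''$, $\beta_{1}=-\frac{1}{2}c\,q'''$. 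Your appeal to the classification also makes the uniqueness claim cleaner than the paper's implicit treatment. The one piece you leave unexecuted — the order-$\e^{2}$ bulk computation, whose ``delicate'' compatibility you flag as the main obstacle — in fact closes without incident: writing the bulk coefficient at order $\e^{2}$ as $P(u)\,u_{x}^{2}+Q(u)\,u_{xx}$, one finds
\begin{equation*}
Q=c\,q'''-c\,r'''h''-r'c'h'''-r'c\,h^{(4)}-3r''c\,h''',\qquad P\equiv\tfrac{1}{2}\,Q',
\end{equation*}
the second identity holding for \emph{arbitrary} $r,q,c,h$, so the two relations you anticipate are not independent and there is exactly one bulk condition at this order — precisely the single ODE displayed in the paper's proof. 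Substituting your order-one relation $q'=h''r'$ into $Q=0$ yields $h'''\,(c\,r')'=0$, hence $c\,r'=\kappa$; the constant $\kappa$ is the overall rescaling of the statement, and normalising it gives \eqref{rcqhc}. One caveat worth recording: the paper's intermediate relation \eqref{e0proof} is printed as $r'=q'\,h''$, whereas your $q'=h''\,r'$ is the correct orientation (it is the one consistent with \eqref{rcqhc} and with the unperturbed condition $\beta'=\alpha'\,a$), so on this point your blind derivation corrects a typo in the source.
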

\begin{proof}
A direct calculation: consider a functional of type \eqref{stringcqnor} and impose that it is conserved at order $O(\e^{2})$
with respect to equation \eqref{eveq}. Denote $\alpha_{0}=r, \,\beta_{0}=q$.  Then, at order $O(1)$, one gets the relation
\beq\label{e0proof}
r'=q'\,h'',
\eeq
while at order $O(\e^{2})$ there are three conditions: the first two, 
$$\alpha_{1}=-\frac{1}{2}c\,r''',\qquad \beta_{1}=-\frac{1}{2}c\,q''',$$  
imply that -- up to order $\e^{2}$ -- we have $\alpha=D_{(c,p,s)}(r)$ and $\beta=D_{(c,p,s)}(q)$. The last condition is given by
$$3r''ch'''+r''' c h'' +r' c h^{(4)}+r' c' h'''-cq'''=0,$$
and after substituting \eqref{e0proof} into it, one obtains \eqref{rcqhc}.
\end{proof}
Let us consider now the $O(\e^4)$ term.
\begin{pro}\label{specialconserved4}
For arbitrary functions $h$, $c$ and $s$, with $c$ not identically zero, equation \eqref{eveq} admits a conserved quantity of order $O(\e^{4})$ of type \eqref{stringcqnor} if and only if 
\beq\label{rel:cp}
p(u)=\frac{3}{5}c(u)\,c'(u)+\lambda \,c(u)^{3},
\eeq
where $\lambda$ is an arbitrary constant. In this case, we have 
\begin{align}
e=&\frac{53}{24}\,\frac{c''\left(c'\right)^{2}}{c^{2}}-\frac{8}{15}\,\frac{ \left(c'\right)^{4}}{c^{3}}+\frac {s'}{c}-\frac{157}{120}\,\frac{\left(c''\right)^{2}}{c}-5\,\frac {s\,c'}{c^{2}}-\frac {173}{120}\,\frac{c'\,c'''}{c}+\frac {5}{12}c^{(4)}\notag\\
&+\lambda\left(-\frac{13}{2}c''c'+\frac{2\,\left(c'\right)^{3}}{c}+\frac{11}{6}c\,c'''\right).\label{gauge:e}
\end{align}
The $O(\e^{4})$-conserved quantity of type \eqref{stringcqnor} is given as in Proposition \ref{specialconserved2}, and it is unique up to a multiplicative constant.
\end{pro}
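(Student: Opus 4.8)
The plan is to carry the direct computation of Proposition~\ref{specialconserved2} one order further, to $O(\e^4)$. I would start from a functional of the normal form \eqref{stringcqnor} whose $O(1)$ and $O(\e^2)$ data are already fixed by Proposition~\ref{specialconserved2} -- so that $\alpha_0=r$, $\beta_0=q$, $\alpha_1=-\tfrac12 c\,r'''$, $\beta_1=-\tfrac12 c\,q'''$ with $r,q$ subject to \eqref{rcqhc} -- and regard $\alpha_2,\alpha_3,\beta_2,\beta_3$ together with the freestanding coefficient $e$ as the unknown functions of $u$ to be determined.

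The conservation requirement is that the total time derivative of the density $x\,\alpha+\e^4 e\,u_x^3+t\,\beta$ along the flow \eqref{eveq} be a total $x$-derivative modulo $O(\e^6)$. Since the density is affine in $t$, this derivative splits into a part linear in $t$ and a part independent of $t$, each of which must vanish separately. The $t$-linear part is $\{H_\beta,H_h\}$ with $H_\beta=\int\beta\,dx$, so it forces $H_\beta$ to be conserved up to $O(\e^6)$, which by a computation of the type in Proposition~\ref{specialconserved2} (cf.\ Proposition~1) gives $\beta=D_{(c,p,s)}(q)$. In the $t$-independent part the explicit contribution $\int\beta\,dx$ cancels the $x$-weighted $\alpha$-term after one integration by parts -- the factor $x$ being converted by $x\,\de_x\mapsto -1$ -- exactly the mechanism that reproduces the characteristic equation at $\e=0$; the residual terms, coming from derivatives landing on the explicit $x$ and from the $e\,u_x^3$ insertion, yield a system of linear ODEs in $u$ for $\alpha_2,\alpha_3$ and $e$, with inhomogeneities built from $h,c,p,s$ and the lower-order data.

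The core of the argument, and the step I expect to be the main obstacle, is the solvability of this $O(\e^4)$ system. Because the $x$-weighting generates correction terms whenever an $x$-derivative falls on the factor $x$, the independent degree-four monomials $u_x^4,\,u_{xx}^2,\,u_x^2u_{xx},\,u_xu_{xxx},\,u_{xxxx}$ produce more scalar equations than there are unknown functions, so the system is overdetermined. Eliminating $\alpha_2,\alpha_3$ and $e$ successively, I anticipate that all but one of the equations merely define these coefficients -- forcing in particular $\alpha=D_{(c,p,s)}(r)$ -- while the surviving relation involves only $p$, $c$ and their derivatives. The delicate point is to keep exact track of the integration-by-parts identities among these monomials so as to isolate that single obstruction; I expect it to reduce to an exact first-order ODE that integrates to \eqref{rel:cp}, $p=\tfrac35 c\,c'+\lambda c^3$, with $\lambda$ the constant of integration.

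Once \eqref{rel:cp} holds the system becomes consistent and I would solve it explicitly. I expect $\alpha_2,\alpha_3$ and $\beta_2,\beta_3$ to be exactly the values making $\alpha=D_{(c,p,s)}(r)$ and $\beta=D_{(c,p,s)}(q)$, in agreement with the $D$-operator form already obtained at order $\e^2$, while the remaining equation fixes the freestanding coefficient and produces the formula \eqref{gauge:e} for $e$. Uniqueness up to a multiplicative constant follows from the linearity and homogeneity of the conservation conditions in the functional: the leading data $r,q$ are determined via \eqref{e0proof} and \eqref{rcqhc} up to a common overall scale, and every higher coefficient, including $e$, is then fixed with no further freedom, exactly as at order $\e^2$.
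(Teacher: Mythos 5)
Your plan is essentially the paper's own proof: a direct computation extending Proposition~\ref{specialconserved2} one order, in which conservation forces $\alpha$ and $\beta$ to be $O(\e^4)$-conserved densities of the form $D_{(c,p,s)}(r)$, $D_{(c,p,s)}(q)$, and the overdetermined residual system leaves exactly two extra conditions -- one fixing $e$ as in \eqref{gauge:e}, the other the single obstruction $\frac{h'''}{c^{2}}\bigl(p'c-3\,p\,c'+\frac{6}{5}c\,(c')^{2}-\frac{3}{5}\,c^{2}c''\bigr)=0$, which is precisely the exact first-order ODE you anticipated and integrates to \eqref{rel:cp}. Your splitting into $t$-linear and $t$-independent parts, the $x\,\de_x\mapsto -1$ cancellation mechanism, and the homogeneity argument for uniqueness all match the paper's (much terser) argument, so the proposal is correct and follows the same route.
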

\begin{proof} As for Proposition \ref{specialconserved2}, the condition of \eqref{stringcqnor} being a conserved quantity at order $\e^{4}$ implies that $\alpha$ and $\beta$ are themselves $O(\e^{4})$-conserved quantities. In addition, one finds two more conditions. One of them involves the function $e(u)$ and is satisfied by the choice \eqref{gauge:e}, the other reads
$$\frac{h'''}{c^{2}}\left(p'c-3\,p\,c'+\frac{6}{5}c\,(c')^{2}-\frac{3}{5}\,c^{2}c''\right)=0,$$
from which one obtains relation \eqref{rel:cp}.
\end{proof}

Due to the above propositions, the required functional has the form
\beq\label{s0}
S_{0}=\int\Big(x\,D_{c,p,s}(r)+t\,D_{(c,p,s)}(q)+\e^{4}\,e(u)\,u_{x}^{3}\Big)\,dx, 
\eeq
with $r$ and $q$ given by \eqref{rcqhc} and $e$ given by \eqref{gauge:e}. By construction, condition \eqref{2stcomm} is satisfied for any Hamiltonian perturbation at order $\e^{2}$, and for the class of equations characterized by \eqref{rel:cp} at order $\e^{4}$. Note that since the density of $S_{0}$ depends explicitly on $x$, its variational derivative is given by
$$\frac{\delta S_{0}}{\delta u}= x\mathcal{E}(D_{c,p,s}(r))+\sum_{k\geq 0}(-1)^{k+1}\frac{d^{k}}{d x^{k}}\frac{\de D_{c,p,s}(r)}{\de u^{(k+1)}} +\mathcal{E}\Big(t\,D_{(c,p,s)}(q)+\e^{4}\,e(u)\,u_{x}^{3}\Big).$$

\begin{exa} In the particular case $c(u)=c_{0}, p(u)=p_{0}, s(u)=0$, then the functional \eqref{s0} reduces to \eqref{s0const}. Note that \eqref{rel:cp} is satisfied by choosing $\lambda=p_{0}/c_{0}^{3}$.
\end{exa}

\begin{rem}
One of the hypothesis on the Hamiltonian perturbation in Proposition \ref{pro:s2} and \ref{pro:s4} is that $c(u)$ does not vanish identically. Indeed, if $c \equiv 0$ then the proof breaks since the special conserved quantity $S_0$ is built on the function $\frac{1}{c}$. Although we do not treat the case $c \equiv 0$ in generality, the following simple example shows that the string equation can be considered also in this case. The generalized KdV equation
$$u_t=u\,u_x+ \e^4 \,   u_{xxxxx} \; ,$$
obtained choosing  $c=0, p=\frac{1}{2}, s=0$ in \eqref{ham}, admits the exact conserved quantity
\beq\label{galilean}
S_0= \int \left(x \, u  + t \,\frac{u^{2}}{2}\right) \, dx \; .
\eeq
Note that  $S_{0}$ is nothing but the conserved quantity associated to the Galilean invariance of the equation.
\end{rem}

\begin{rem}
A natural question arising in the above construction is whether the constraint \eqref{rel:cp} at order $\e^{4}$
can be eliminated by a different choice of the class of conserved quantities
(e.g., depending rationally on the derivatives).  This, however, is outside the scope of the present paper.
\end{rem}

\subsection{Fixing the initial data} \label{sub:initialdata}
We now address the problem of recovering the required initial data, namely to modify \eqref{s0} to get a functional which satisfies \eqref{2stt0}. Consider a Cauchy problem for equation \eqref{eveq}, with given initial data $u(x,0,\e)=\varphi(x)$, independent of $\e$. In addition, we introduce the function $f$, defined by the relation $g'=c\,f'$. Due to the definition of $g$, we thus have that $f$ is locally related to the initial data by the identity
\beq\label{fid}
c(\varphi(x))\,f'(\varphi(x))=x,
\eeq
for $x$ belonging to some interval of the real axis. The correction of the initial data (in this case, at order at most $\e^{4}$) can be obtained as follows: for arbitrary functions $f_{1}$ and $f_{2}$, we consider the functional
$$H_{f+\e^{2}f_{1}+\e^{4}f_{2}}=\int D_{c,p,s} (f+\e^2 f_1+\e^4 f_2) dx, $$
and then truncate it at order $\e^{4}$, by introducing the $4^{th}$ order (in $\e$) polynomial $H_{f}^{\e}$, defined by the relation $H_{f}^{\e}-H_{f+\e^{2}f_{1}+\e^{4}f_{2}}=O(\e^{6})$. Finally, define
\beq\label{eq:sgeneral}
S_f:= S_{0}- H^{\e}_{f},
\eeq
which by construction satisfies \eqref{2stcomm}\footnote{at order $\e^{2}$. The identity at $\e^{4}$ is verified subject to condition \eqref{rel:cp}.} and, due to the definition of $f$, also \eqref{2ste0}. Note that \eqref{eq:sgeneral} itself is a polynomial of order $4$ in $\e$. A suitable choice of the functions $f_{1}$ and $f_{2}$ implies condition \eqref{2stt0}.

\begin{pro}\label{pro:s2}
For any Hamiltonian perturbation of type (\ref{eveq}) with arbitrary coefficients $h,c,p,s$ the functional \eqref{eq:sgeneral}
with $S_{0}$ given by (\ref{stringcq}), is conserved up to $O(\e^4)$.
Moreover, if $f$ is chosen such that \eqref{fid} holds and
\begin{align}\label{kdv:f1}
f'_{1}(v)&=\frac{1}{2\,c(v)}\frac{d^{2}}{dv^{2}}\left(\frac{c(v)}{(c(v)f'(v))'}\right),
\end{align}
then we have
$$\du{S_f}_{|t=0, u=\varphi(x)}=O(\e^4).$$
\end{pro}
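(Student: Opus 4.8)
The plan is to verify condition \eqref{2stt0} by a direct order-by-order computation in $\e$, evaluating $\du{S_f}$ at $t=0$ along $u=\varphi(x)$. Since $S_f = S_0 - H_f^\e$ with $S_0$ given by \eqref{s0}, at $t=0$ the term proportional to $t$ in $S_0$ drops out after taking the variational derivative (the $t\,D_{(c,p,s)}(q)$ piece contributes $t\,\mathcal{E}(D_{(c,p,s)}(q))$, which vanishes at $t=0$). Thus the task reduces to showing
$$
\du{}\left(\int x\,D_{c,p,s}(r)\,dx + \int \e^4 e(u)u_x^3\,dx - H_f^\e\right)_{|u=\varphi} = O(\e^4).
$$
First I would record the building block: for a density of the form $x\,\alpha(u,u_x,\dots)$ the variational derivative splits into the $x\,\mathcal{E}(\alpha)$ term plus the boundary-type correction $\sum_{k\ge 0}(-1)^{k+1}\frac{d^k}{dx^k}\frac{\de \alpha}{\de u^{(k+1)}}$, exactly as displayed after \eqref{s0}. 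I would compute each of the two pieces ($x\,\mathcal{E}(D_{c,p,s}(r))$ and the correction term) explicitly to order $\e^2$ using $r' = 1/c$ from \eqref{rcqhc}.

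**Order $\e^0$.** The claim here is just the classical statement: at leading order $\du{S_f}$ reduces to $x - g'(\varphi) + \text{(from the }x\,\mathcal{E}(r)\text{ term)}$, and using $r'=1/c$, $g'=c f'$, the defining relation \eqref{fid} $c(\varphi)f'(\varphi)=x$ forces the $O(1)$ part to vanish identically. This is the analogue of the unperturbed check in Section \ref{sec:dztheory} and should be essentially immediate.

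**Order $\e^2$.** This is where the condition \eqref{kdv:f1} on $f_1$ enters, and I expect it to be the main obstacle. I would expand $\du{(x\,D_{c,p,s}(r))}$ to order $\e^2$: the Euler--Lagrange operator applied to the $-\tfrac{\e^2}{2}c_r(u)u_x^2$ part, together with the boundary-correction term, produces at $t=0,\ u=\varphi$ an expression in $\varphi_x,\varphi_{xx}$ with coefficients depending on $c$ and its derivatives. Independently, $\du{H_f^\e}$ at order $\e^2$ contributes the $\mathcal{E}$ of the $\e^2$-density of $D_{c,p,s}(f)$ plus the full $O(1)$ part $\mathcal{E}(f_1')=f_1'$ coming from the $\e^2 f_1$ correction. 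Matching the total to zero yields a second-order ODE in $v$ relating $f_1'$ to $f$, $c$ and their derivatives; I would then verify this is precisely \eqref{kdv:f1}. The delicate step is bookkeeping: one must carefully convert all $x$-derivatives of $\varphi$ into $u$-derivatives by differentiating \eqref{fid} (which gives $\varphi_x = 1/(c f')'$ evaluated at $\varphi$, hence the appearance of $(cf')'$ in the denominator of \eqref{kdv:f1}), and keep track of the non-self-adjoint boundary term from the explicit $x$-factor, which is easy to mishandle.

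**Concluding the statement.** Once the $O(1)$ and $O(\e^2)$ parts are shown to cancel under \eqref{fid} and \eqref{kdv:f1}, the remaining contribution is automatically $O(\e^4)$, which is exactly the asserted estimate. I would remark that the Proposition as stated only claims $O(\e^4)$ and so does not constrain $f_2$; the function $f_2$ would be fixed only if one pushed the cancellation to order $\e^4$ (presumably the content of Proposition \ref{pro:s4}). The genuinely nontrivial input is the algebraic identity \eqref{kdv:f1}: everything else is the $O(1)$ consistency check plus a mechanical, if lengthy, $\e^2$ expansion, so I would organize the write-up to isolate that ODE computation and simply cite \lq\lq a direct calculation'' for the routine integration-by-parts manipulations, mirroring the style of the proofs of Propositions \ref{specialconserved2} and \ref{specialconserved4}.
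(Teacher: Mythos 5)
Your proposal is correct and follows essentially the same route as the paper's proof: a direct order-by-order evaluation of $\du{S_f}$ at $t=0$, $u=\varphi(x)$, in which the $O(1)$ part is killed by \eqref{fid} and the $O(\e^{2})$ part --- including the non-self-adjoint correction terms coming from the explicit $x$-dependence of the density --- is reduced, via the identities $\varphi_{x}=1/(c\,f')'$ and $\varphi_{xx}=-(c\,f')''/[(c\,f')']^{3}$ obtained by differentiating \eqref{fid}, to the pointwise condition \eqref{f1t0id}, which is then checked to be solved by the choice \eqref{kdv:f1}. The only point you leave unaddressed is the proposition's first claim, that $S_f$ is conserved up to $O(\e^{4})$ for arbitrary $f,f_{1}$; the paper dispatches this in one line, since by linearity of the operator $D_{(c,p,s)}$ it follows from Proposition \ref{specialconserved2} (conservation of $S_{0}$) together with Dubrovin's proposition on the conserved quantities $H_{f}$, so you should add that sentence to cover the full statement.
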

\begin{proof}  The fact that the functional \eqref{eq:sgeneral} is a $O(\e^{2})$-conserved quantity of \eqref{eveq} for any choice of the functions $f$ and $f_{1}$ follows immediately from the definition of the linear operator $D_{c,p,s}$. Let us prove the second part of the proposition. The condition
$$\du{S_f}_{|t=0, u=\varphi(x)}=O(\e^2)$$
is verified due to the choice \eqref{fid}. Moreover, the $\e^{2}$ coefficient of $S_{f}$ is given by
\begin{equation*}
-\int \left[f_{1}+\frac{c}{2} \bigg(x\,\left(\frac{1}{c}\right)''+t\,\left(\frac{h''}{c}\right)''-f'''\bigg)u_{x}^{2}\right] dx .
\end{equation*}
The variational derivative of this functional, evaluated at $t=0$, $u=\varphi(x)$ and set equal to zero gives the following condition on $f_{1}$:
\begin{align}
f'_{1}(\varphi(x))=& \,x\left[\left(\frac{2(c')^{2}}{c^{2}}-\frac{c''}{c}\right)\varphi_{xx}(x)+\left(\frac{5}{2}\frac{c'c''}{c^{2}}-2\frac{(c')^{3}}{c^{3}}-\frac{1}{2}\frac{c'''}{c}\right)\varphi_{x}(x)^{2}\right]\notag\\
+&\left(\frac{2(c')^{2}}{c^{2}}-\frac{c''}{c}\right)\varphi_{x}(x)-c\,f'''\varphi_{xx}(x)+\frac{1}{2}\left(c'f'''+cf^{(4)}\right)\varphi_{x}(x)^{2}. \label{f1t0id}
\end{align}
Of course here the functions $f, c$ and their derivatives are computed at $u=\varphi(x)$. Using condition \eqref{fid}, together with the differential identities
$$\varphi_{x}(x)=\frac{1}{(c\,f')'}|_{u=\varphi(x)},\qquad \varphi_{xx}(x)=-\frac{(c\,f')''}{[(c\,f')']^{3}}|_{u=\varphi(x)},$$
obtained from it, one gets that \eqref{f1t0id} is satisfied by the choice \eqref{kdv:f1}. In particular, this implies
$$\du{ S_{f}}_{|t=0,u=\varphi(x)}=O(\e^{4}),$$
and the thesis is proved.
\end{proof}

\begin{pro}\label{pro:s4}
Under the hypotheses of the previous proposition, and  if in addition the coefficient $p$ of the Hamiltonian satisfies condition $\eqref{rel:cp}$, then the functional \eqref{eq:sgeneral} is conserved up to $O(\e^6)$. Moreover, if $f_{2}$ is chosen such that
$$f'_{2}(v)=\frac{1}{c(v)}\frac{d^{2}}{dv^{2}}\hat{f}_{2}(v) \; , $$
where
\begin{align*}
\hat{f}_{2}=&\frac{p}{2\,c}\left[\frac{c}{\left[(c\,f')'\right]^{2}}\frac{d^{2}}{dv^{2}}\left(\frac{1}{(c\,f')'}\right)+
c'\frac{d}{dv}\left(\frac{1}{\left[(c\,f')'\right]^{3}}\right)\right]\frac{s}{\left[(c\,f')'\right]^{3}}\\
&+\frac{1}{6}\frac{c^{2}}{\left[(c\,f')'\right]^{2}}\frac{d^{3}}{dv^{3}}\left(\frac{1}{(c\,f')'}\right)
\frac{5}{24}\frac{c\,c'}{\left[(c\,f')'\right]^{2}}\frac{d^{2}}{dv^{2}}\left(\frac{1}{(c\,f')'}\right)\\
+&\Bigg[\frac{49}{120}c\,c'\,c''f'\frac{1}{10}c^{3}\,f^{(4)}-\frac{1}{10}c^{2}c'''f'-\frac{3}{5}(c')^{3}f'+\frac{11}{60}c^{2}\,c'\,f'''\\
&+\frac{11}{30}c\,(c')^{2}\,f''-\frac{3}{8}c^{2}\,c''\,f''\Bigg]\frac{1}{\left[(c\,f')'\right]^{3}}\frac{d}{dv}\left(\frac{1}{(c\,f')'}\right)\\
+&\Bigg[\frac{1}{2}(c')^{2}f'''+\frac{1}{2}c'\,c''\,f''-\frac{1}{4}c\,c''f'''-\frac{1}{4}c\,c'''f''+\frac{1}{12}c'c'''f'\\
&-\frac{1}{24}c^{2}\,f^{(5)}-\frac{1}{24}c\,c^{(4)}\,f'\Bigg]\frac{c}{\left[(c\,f')'\right]^{2}},
\end{align*}
then we have
$$
\du{S_f}_{|t=0, u=\varphi(x)}=0.
$$
\end{pro}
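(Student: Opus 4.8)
The plan is to follow the same two-step scheme used in the proof of Proposition~\ref{pro:s2}, now carrying the $\e$-expansion one order further.

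First I would dispose of the conservation claim, which is essentially immediate from the earlier results. Since $p$ is assumed to satisfy \eqref{rel:cp}, Proposition~\ref{specialconserved4} guarantees that the $S_0$-part of $S_f$ is conserved up to $O(\e^6)$, with the gauge term $e(u)$ fixed by \eqref{gauge:e}. On the other hand, Dubrovin's approximate-conservation result for the functionals $H_f$ tells us that $H_{f+\e^2 f_1+\e^4 f_2}$ is conserved up to $O(\e^6)$, and its truncation $H^\e_f$ differs from it only by $O(\e^6)$; hence $S_f = S_0 - H^\e_f$ is conserved up to $O(\e^6)$, which is the first assertion.

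For the second assertion I would exploit the fact that, by construction, $S_f$ is a polynomial of degree $4$ in $\e$, so that $\frac{\delta S_f}{\delta u}$ contains only the orders $\e^0,\e^2,\e^4$. The orders $\e^0$ and $\e^2$ already vanish at $t=0,\,u=\varphi(x)$ thanks to \eqref{fid} and the choice \eqref{kdv:f1} made in Proposition~\ref{pro:s2}, so it remains only to annihilate the $\e^4$ coefficient. The concrete steps are: (i) extract the $\e^4$ part of the density of $S_f$ at $t=0$, which reads
$$x\big(p_r u_{xx}^2+s_r u_x^4\big)+e\,u_x^3-f_2+\tfrac12\,c\,f_1''' u_x^2-p_f u_{xx}^2-s_f u_x^4,$$
where $p_r,s_r$ (resp. $p_f,s_f$) are the coefficients of $D_{(c,p,s)}(r)$ (resp. $D_{(c,p,s)}(f)$) and the cross term $\tfrac12\,c\,f_1''' u_x^2$ arises from the $\e^2$-part of $\e^2 D_{(c,p,s)}(f_1)$; (ii) apply the variational derivative, using for the $x$-dependent terms the explicit formula recorded just after \eqref{s0}; (iii) set $u=\varphi(x)$ and eliminate both the explicit $x$ and the derivatives $\varphi_x,\varphi_{xx},\dots$ by means of \eqref{fid}, i.e. $x=c(\varphi)f'(\varphi)$, together with the differential identities $\varphi_x=1/[(cf')'(\varphi)]$ and $\varphi_{xx}=-(cf')''/[(cf')']^3(\varphi)$ and their higher analogues. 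After this substitution the $\e^4$ coefficient becomes a function of $u=\varphi$ alone, linear in $f_2$ and its derivatives; setting it to zero yields an ordinary differential relation for $f_2'$, and a direct check shows that the stated $f_2'=\tfrac1c\frac{d^2}{dv^2}\hat f_2$ solves it, giving $\frac{\delta S_f}{\delta u}\big|_{t=0,u=\varphi}=0$ exactly.

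I expect the main obstacle to lie in the symbolic bookkeeping of steps (ii)--(iii): one must track every $\e^4$ contribution, including the cross term generated by $f_1$ and the gauge term $e(u)u_x^3$, and handle the explicit $x$-dependence of the density correctly before performing the substitution $x=cf'$. Deriving the closed form for $\hat f_2$ — rather than merely verifying that the given expression works — is the genuinely laborious part; it amounts to reducing the resulting inhomogeneous linear ODE for $f_2'$ to the total-second-derivative form $c\,f_2'=\hat f_2''$, which is precisely what renders $f_2$ explicitly integrable and matches the structure already seen for $f_1$ in \eqref{kdv:f1}.
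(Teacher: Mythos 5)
Your proposal is correct and follows essentially the same route as the paper: the paper's own proof of this proposition is simply the statement that it proceeds as in Proposition~\ref{pro:s2}, i.e.\ conservation follows from Proposition~\ref{specialconserved4} together with the approximate conservation of $H^{\e}_{f}$, and the choice of $f_{2}$ is obtained by annihilating the $\e^{4}$ coefficient of $\du{S_f}$ at $t=0$, $u=\varphi(x)$ via the substitution $x=c(\varphi)f'(\varphi)$ and its differential consequences. Your accounting of the $\e^{4}$ density (including the cross term $\tfrac12\,c\,f_1'''u_x^2$ and the gauge term $e(u)u_x^3$) matches the paper's construction; the only minor imprecision is that the resulting condition is an explicit pointwise formula for $f_{2}'(v)$ rather than a genuine ODE, exactly as \eqref{f1t0id} was for $f_{1}'$.
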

\begin{proof}
The proof is similar to the previous proposition.
\end{proof}
The functional \eqref{eq:sgeneral} with $f_{1}$ and $f_{2}$ given as in the above propositions is the required functional for the variational string equation, as it satisfies conditions \eqref{2st} at the desired order. The variational string equation (of order $K$) associated to $S_{f}$ is defined as the equation for the critical point: 
\beq\label{stringequation}
\du{S_{f}}=0.
\eeq
Explicitly, at order $O(\e^{2})$, the variational string equation reads
\begin{align*}\label{stringequation}
&\frac{1}{c}\left(x+t\,h''-c\,f'\right)+\e^{2}\Bigg\{x\left[c\,\Big(\frac{1}{c}\Big)''u_{xx}+\frac{1}{2}\Big(c\,\Big(\frac{1}{c}\Big)''\Big)'u_{x}^{2}\right]+c\,\Big(\frac{1}{c}\Big)''u_{x}-c\,f'''u_{xx}\\
&+\frac{1}{2}\left(c\,f'''\right)'u_{x}^{2}-\frac{1}{2\,c}\left(\frac{c}{(c\,f')'}\right)''+t\left[c\,\Big(\frac{h''}{c}\Big)''u_{xx}+\frac{1}{2}\Big(c\,\Big(\frac{h''}{c}\Big)''\Big)'u_{x}^{2}\right]\Bigg\}=O(\e^{4}).
\end{align*}
It is a nontrivial and remarkable fact that this equation provides approximate solutions to a nonlinear PDE for arbitrary initial data.\\

It is natural to ask in which case the construction of the variational string equation can be extended higher orders, and
more generally how to select Hamiltonian perturbations admitting a variational string equation at any order. There are two main
difficulties to solve this problem: first of all, for a generic perturbation and for an arbitrary function $f$, it is not known whether
the functional $\int f(u) dx$ can be deformed at any order to  a conserved quantity with density depending \emph{polynomially} on the jet variables,
as required in the construction of the variational string equation. In the language of \cite{lizh06}, this requirement is the definition
of \emph{formal integrability} for the perturbed equation. In addition, it is in general not known whether special conserved quantities
of type $S_{0}$ exist at any order. A positive answer to these problems is given by the KdV example. Indeed, due to Galilean invariance, KdV
admits the exact conserved quantity \eqref{galilean}, and the equation itself has been proved in \cite{lizh06} to be formally integrable.
Therefore, the string equation in this case can be extended at any order.

\section{Approximate solutions to perturbed equations}\label{sec:solutions}
We now use the variational string equation \eqref{stringequation} to compute the first two perturbative corrections of
the solution $u(x,t,\e)$ to an Hamiltonian perturbation \eqref{eveq}. In \cite{mara11}, we gave the first correction $v^{1}$
without explaining in detail the procedure followed to obtain it. Here we fill this gap, and compute explicitly the second term $v^{2}$.

Suppose the Cauchy problem of (\ref{eveq}) admits the semiclassical expansion \eqref{expu}, and suppose in addition that the
equation admits a variational string equation up to  order $O(\e^{2N})$. Then, we first require $u(x,t,\e)$ to be an (approximate) solution of the variational string equation:
$$\du{S_{f}}_{|u=u(x,t,\e)}=O(\e^{2N+2}),$$
and we expand the above formal identity according to the Ansatz
\eqref{expu}. At order $O(1)$, we find
\beq\label{striue0}
x+t\,h''(v^{0})=c(v^{0})\,f'(v^{0}),
\eeq
which is precisely the formula for the method of characteristics, thus implying that $v^{0}$ is a solution of the dispersionless equation, as required. Consider now the $\e^{2}$ term. This reads
\begin{align*}
&\left(\frac{x\,c'+t\,(h''c'-h'''c)+c^{2}\,f''}{c^{2}}\right) v^{1}=x\left[c\,\Big(\frac{1}{c}\Big)''v^{0}_{xx}+\Big(c\,\Big(\frac{1}{c}\Big)''\Big)'\frac{(v^{0}_{x})^{2}}{2}\right]+c\,\Big(\frac{1}{c}\Big)''v^{0}_{x}\\
&-c\,f'''v^{0}_{xx}+\left(c\,f'''\right)'\frac{(v^{0}_{x})^{2}}{2}-\frac{1}{2\,c}\left(\frac{c}{(c\,f')'}\right)''+t\left[c\,\Big(\frac{h''}{c}\Big)''v^{0}_{xx}+\Big(c\,\Big(\frac{h''}{c}\Big)''\Big)'\frac{(v^{0}_{x})^{2}}{2}\right],
\end{align*}
where all functions are evaluated at $u=v^{0}(x,t)$. The explicit dependence on $f$ of the above formula can easily be eliminated:
further differentiating the exact identity \eqref{striue0} with respect to $x$, one obtains a formula for the quantities $f^{(i)}(v^{0})$.
Substituting into the equation for $v^{1}$, this gives
\begin{equation}\label{eq:v1}
 v^1(x,t)=\frac{t}{2}\,\frac{\partial}{\partial x}\! \left(\frac{\left(c\,h'''\right)'
(v^{0}_{x})^{2}+2\,c\,h'''\,v^{0}_{xx}+t\,c\,(h''')^{2}v^{0}_{x}\,v^{0}_{xx}+t\,c'\,(h''')^{2}
(v^{0}_{x})^{3}}{(1+t\,h'''\,v^{0}_{x})^{2}}\right),
\end{equation}
which is a universal formula: given the solution to the dispersionless Cauchy problem \eqref{eq:vhopf}, then \eqref{eq:v1} satisfies the
transport equation \eqref{eq:tr1} with the correct initial data. Note that using the variational string equation  the problem is solved by a
finite number of elementary manipulations. A more elegant representation of $v^{1}$, first introduced in \cite{mara11}, is given by considering
the functional (depending on $t$):
\beq\label{k1}
K_{1}[u]:=-\frac{1}{2}\int c\,(u)\,u_{x}\,\log{\Big(1+t\,h'''\left(u\right)u_{x}\Big)}\,dx.
\eeq
Then, the function $v^{1}$ admits the Hamiltonian representation
$$v^{1}(x,t)=\frac{\partial}{\partial x}\left(\frac{\delta K_{1}[u]}{\delta u(x)}_{|u=v^{0}(x,t)}\right).$$
The calculation of the higher order corrections can be computed similarly, provided the variational string equation associated with the
PDE exists at the desired order. For instance, let us consider the second correction term for the Hamiltonian satisfying condition \eqref{rel:cp}.
Since the formulae in this case are much more cumbersome, we present here only the more compact variational formulation. Indeed, it turns out that
the transformation from $v^0(x,t) \to u(x,t,\e)$ can be written -- up to order $\e^{4}$ --  in the canonical form
\begin{align}\label{eq:canonicK}
u &=v^0+\e\, \lbrace v^0, K \rbrace +\frac{\e^2}{2} \lbrace \lbrace v^0, K \rbrace , K\rbrace +\dots, \\
K &= \e K_1+\e^3 K_3 +\dots ,  \nonumber
\end{align}
where $K_{1}$ is given by \eqref{k1} and $K_{3}$ is the following expression:
\begin{align*}\label{eq:k3}
K_{3}[u]=&-\left(\left(\frac{c}{(h''')^{2}}\right)'''+\frac{c'''}{(h''')^{2}}\right) \frac{c\,u_{x}\log{(1+t\,h'''u_{x})}}{8\,t^{2}}\\ \nonumber
&-\frac{1}{40} \frac{c^{2}(h''')^{3}t^{3} u_{xx}^{3}\left(5+5\,t\,h'''\,u_{x}+t^{2}\,(h''')^{2}\,u_{x}^{2}\right)}{(1+t\,h'''u_{x})^{5}}\\ \nonumber
&+\frac{3}{40}\frac{c^{2}\,t\,h^{(4)}u_{xx}^{2}\left(4+10\,t\,h''' u_{x}+5\,t^{2}\,(h''')^{2} u_{x}^{2}\right)}{(1+t\,h'''u_{x})^{5}}+\frac{1}{2}\,\frac{t\,h''' s\,u_{x}^{4}(2+t\,h'''u_{x})}{(1+t\,h'''u_{x})^{2}}\\ \nonumber
&+\frac{1}{20}\frac{(3cc'+5\lambda c^{3})\,t\,h'''u_{xx}^{2}(2+t\,h''' u_{x})\left(2+2\,t\,h''' u_{x}+t^{2}\,(h''')^{2} u_{x}^{2}\right)}{(1+t\,h'''u_{x})^{4}}\\ \nonumber
&-t\,u_{x}^{4}\left(\frac{c'}{20}+\frac{\lambda}{12}c^{2}\right)\left(\frac{2\,c\,h^{(5)}+6\,c'\,h^{(4)}}{(1+t\,h'''u_{x})^{3}}-\frac{3\,t\,c\,(h^{(4)})^{2}u_{x}}{(1+t\,h'''u_{x})^{4}}\right)\\ \nonumber
&+\frac{u_{x}^{2}}{240\,t}\Bigg[\frac{12\,c^{2}\,(h^{(4)})^{3}}{(h''')^{4}\,(1+t\,h'''u_{x})^{5}}+\frac{3\,c\,h^{(4)}\left(3\,h'''\,h^{(4)}c'+3\,h^{(5)}h'''c-16\,c(h^{(4)})^{2}\right)}{(h''')^{4}\,(1+t\,h'''u_{x})^{4}}\\ \nonumber
&-\frac{3h^{(4)}\left(8\,c\,c''(h''')^{2}-4\,c^{2}(h^{(4)})^{2}+9\,c\,c'\,h'''\,h^{(4)}+9\,c^{2}\,h'''\,h^{(5)}-16\,(c')^{2}\,(h''')^{2}\right)}{(h''')^{4}(1+t\,h'''\,u_{x})^{3}}\\
\nonumber
& -\frac{96\,(c')^{2}(h''')^{3}h^{(4)}-48\,c\,c''(h''')^{2}h^{(4)}+63\,c\,c'\,h'''(h^{(4)})^{2}+5\,c\,c'''(h''')^{3}}{(h''')^{4}(1+t\,h'''\,u_{x})^{2}}\\
\nonumber
& +\frac{252\,c^{2}(h^{(4)})^{3}-63\,c^{2}h'''h^{(4)}h^{5}}{(h''')^{4}(1+t\,h'''\,u_{x})^{2}} +\frac{-30\,c^{2}\,(h''')^{2}h^{(6)}+441\,c^{2}\,h'''\,h^{(4)}\,h^{(5)}}{(h''')^{4}(1+t\,h'''\,u_{x})}
\end{align*}
\begin{align*}
\nonumber
& -\frac{90\,c\,c'(h''')^{2}h^{(5)}-441\,c\,c'h'''\,(h^{(4)})^{2}+768\,c^{2}\,(h^{(4)})^{3}+114\,c\,c''\,(h''')^{2}h^{(4)}}{(h''')^{4}(1+t\,h'''\,u_{x})}\\ \nonumber
&+\frac{48\,(c')^{2}\,(h''')^{2}h^{(4)}+35\,c\,c'''\,(h''')^{3}}{(h''')^{4}(1+t\,h'''\,u_{x})} +\frac{180\,c\,h^{(4)}}{(h''')^{3}}\left(c\,h^{(4)}\right)'-\frac{90\,c}{(h''')^{2}}\left(c'\,h^{(4)}\right)'\\ \nonumber
& -\frac{30\,c^{2}\,h^{(6)}}{(h''')^{2}}-\frac{180\,c^{2}\,(h^{(4)})^{3}}{(h''')^{4}}+\frac{35\,c\,c'''}{h'''}-\frac{5\,c\,c'}{h'''}(1+t\,h'''\,u_{x})\Bigg].
\end{align*}
The canonical transformation (\ref{eq:canonicK}), with $K_1$ and $K_3$ as above, is well-defined for the solution $v^0$ of
the unperturbed equation up to the time of the gradient catastrophe. The apparent singularities of $K_3$ at $t=0$ and $h'''(u)=0$ cancel out; in fact, $K_3$, as well as $K_1$, vanishes identically if $t=0$ or $h'''=0$. In the particular case when $h''(u)=u$, the above formula reduces to
\begin{align*}
K_{3}[u]=& -\frac{1}{4}\frac{c\,c'''u_{x}\log{(1+t\,u_{x})}}{t^{2}}-\frac{1}{40}\frac{t^{3}c^{2}(u_{xx})^{3}(5+5\,t\,u_{x}+t^{2}(u_{x})^{2})}{(1+t u_{x})^{5}}\\
&+\frac{1}{20}\frac{t\,c\,(3c'+5c^{2}\lambda)(u_{xx})^{2}(2+t\,u_{x})(2+2\,t\,u_{x}+t^{2}(u_{x})^{2})}{(1+t\,u_{x})^{4}}\\
&+\frac{1}{48}\frac{(2+t\,u_{x})\left[c\,c'''(6+6\,t\,u_{x}-t^{2}(u_{x})^{2})+24\,s\,t\,(u_{x})^{2}\right]}{t(1+t\,u_{x})^{2}}.
\end{align*}
A direct computation shows that the corrections $v^1$, $v^2$ computed from the canonical transformation (\ref{eq:canonicK})
satisfy the appropriate transport equations (\ref{eq:tr1}, \ref{eq:tr2}). This is an \emph{a posteriori} confirmation of the validity of the variational string equation.

\begin{exa} For $t\in[0,t_{c})$, the solution to KdV with initial data $\varphi$ is given, at order $\e^{4}$, by
\begin{align*}
u=v^{0}&+\frac{\e^{2}}{2}\de_{x}^{2}\left(\log{(1+t\,v^{0}_{x})}-\frac{1}{(1+t\,v^{0}_{x})}\right)+\frac{\e^{4}\,t^{2}}{8}\de_{x}^{2}\Bigg[-\frac{t^{2}}{10}\frac{(5+t\,v^{0}_{x})(v^{0}_{xx})^{3}}{(1+t\,v^{0}_{x})^{5}}\\
& +t\,\de_{x}\left(\frac{3}{10}\frac{20+15\,t\,v^{0}_{x}+3\,t^{2}(v^{0}_{x})^{2}}{(1+t\,v^{0}_{x})^{5}}\right)+\de_{x}^{2}\left(\frac{(2+t\,v^{0}_{x})^{2}}{(1+t\,v^{0}_{x})^{4}}\,v^{0}_{xx}\right)\Bigg] +O(\e^{6}),
\end{align*}
where $v^{0}$ is the solution of Hopf with same initial data. Note that at $t=0$, the above formula gives $u(x,0,\e)=v^{0}(x,0)=\varphi(x)$.
\end{exa}

\section{The string equation is an approximate identity}\label{sec:validity}
In this Section we discuss in which precise sense solutions of a
Hamiltonian perturbation of the Hopf equation satisfy a variational string
equation. Indeed, given the Cauchy problem (\ref{cauchypert}) and the functional $S_f$ conserved up to $O(\e^K)$,
we will prove that the solution $u(x,t,\e)$ of the Cauchy problem approximatively satisfies the string equation: 
$$\du{S_f}(u(x,t,\e))=O(\e^K),$$
provided it is sufficiently regular in the $\e=0$ limit. The problem presents many difficulties, as the string equation is the critical point equation for an ill-defined functional (in the generic case, $g'(0) \neq 0$, the integral diverges for any rapidly decreasing function), which is not exactly conserved. Moreover, the string equation itself is singular and valid only locally in the space-time plane.

The first step of our proof is to show that the variational derivatives of a formal functional satisfies
a \emph{linear} PDE with a forcing term, which vanishes in case the functional is (formally) conserved.
In the case of the string equation, this forcing term will be small, proportional to  $\e^{K}$ for some $K\geq 1$.

After this result, we are lead to analyze local solutions to linear PDEs with a small forcing term. Here, the standard methods of functional analysis are not effective,
as the equation is valid on some bounded domain of the real line and we ignore the boundary values. Equivalently, when estimating the time derivative of the
$L^2$ norm of the solution and integrating by parts some unknown boundary terms appear.

To overcome this problem, we will estimate the (time derivative of the) $L^2$ norm on an interval whose extremes vary with time.
Remarkably, if the extremes vary with a simple
law - reminiscent of the characteristics' equation -  the boundary terms cancel-out.

\subsection{A linear equation for conserved quantities}
Here we prove that the variational derivative $\du{M}$ of a formal functional $M$ satisfies
a linear equation with a forcing term which vanishes in case the functional is formally conserved.
\begin{thm}\label{thm:raymondGelfand}
Let $M$ be a formal local functional, possibly depending explicitly on $x,t$, let $\tilde{m}=\du{M}$ be its variational derivative,
and let $u(x,t)$ be a solution of the Hamiltonian PDE 
$$u_t=\frac{d}{d x}\du{H}  \; .$$ 
Then, the function of two variables $\mathcal{M}(x,t):=\tilde{m}(x,t,u(x,t),u_{x}(x,t),\dots),$ satisfies the linear equation
\beq\label{eq:raymondGelfand}
\frac{\de\mathcal{M}}{\de t} =\sum_{i\geq 0} \mathcal{A}_i(x,t)
\frac{\de^{i+1}\mathcal{M}}{\de x^{i+1}} + \mathcal{R}(x,t),
\eeq
where
$$\mathcal{A}_i(x,t)= \left(\frac{\partial}{\partial u^{(i)}}\du{H}\right)_{|u=u(x,t)},\quad \mathcal{R}(x,t) = \left( \partial_t \du{M} + \eu{\big(\du{M}\frac{d}{dx}\du{H}\big)} \right)_{|u=u(x,t)},$$
provided all the partial derivatives on the right hand side exist and are continuous. 
\begin{proof}
For any sufficiently regular solution $u$, we have that
\begin{align*}
\frac{\de \mathcal{M}}{\de t} &= \left( \partial_t \du{M} + 
\sum_{i\geq 0} \frac{\partial }{\partial u^{(i)}}\du{M}\, \frac{d u^{(i)}}{dt} \right)_{|u=u(x,t)}, \\
&=  \left( \partial_t \du{M} + \sum_{i\geq 0} \frac{\partial }{\partial u^{(i)}}\du{M}\,  \frac{d^{i+1} }{d x^{i+1}} \du{H}\right)_{|u=u(x,t)}. 
\end{align*}
The thesis follows from the remarkable identity due to Gel'fand and Dikii \cite{gelfand76}:
$$\sum_{i\geq 0} \frac{\partial \tilde{f}}{\partial u^{(i)}} \frac{d^{i+1}}{d x^{i+1}}\tilde{g} - \sum_{i\geq 0}
\frac{\partial \tilde{g}}{\partial u^{(i)}} \frac{d^{i+1}}{d x^{i+1}} \tilde{f}=
\eu{}\big( \tilde{f} \frac{d}{dx} \tilde{g} \big),$$
which is valid for any two $\tilde{f},\tilde{g}$ variational derivatives of formal local functionals $\tilde{f}=\du{F}$, $\tilde{g}=\du{G}$.
\end{proof}
\end{thm}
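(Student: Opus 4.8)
The plan is to regard $\mathcal{M}(x,t)=\tilde m(x,t,u(x,t),u_x(x,t),\dots)$ as an ordinary composite function of $x,t$, differentiate it in $t$ by the chain rule, use the evolution equation to trade the $t$-derivatives of the jet variables for $x$-derivatives, and then transpose the resulting differential operator from $\tilde m$ onto $\frac{\delta H}{\delta u}$ by the Gel'fand--Dikii identity. First I would write
$$\frac{\partial \mathcal{M}}{\partial t}=\left(\partial_t\tilde m+\sum_{i\ge 0}\frac{\partial \tilde m}{\partial u^{(i)}}\,\frac{d u^{(i)}}{dt}\right)_{|u=u(x,t)},$$
where $\partial_t$ denotes the explicit $t$-dependence carried by $M$. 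Since $u^{(i)}=d^i u/dx^i$ and $u$ solves $u_t=\frac{d}{dx}\frac{\delta H}{\delta u}$, commuting the $x$- and $t$-differentiations gives $\frac{d u^{(i)}}{dt}=\frac{d^{i+1}}{dx^{i+1}}\frac{\delta H}{\delta u}$, so the sum becomes $\sum_{i\ge 0}\frac{\partial \tilde m}{\partial u^{(i)}}\frac{d^{i+1}}{dx^{i+1}}\frac{\delta H}{\delta u}$.

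The key algebraic step is then to apply the Gel'fand--Dikii identity with $\tilde f=\tilde m=\frac{\delta M}{\delta u}$ and $\tilde g=\frac{\delta H}{\delta u}$, which rewrites this sum as $\sum_{i\ge0}\frac{\partial}{\partial u^{(i)}}\frac{\delta H}{\delta u}\frac{d^{i+1}}{dx^{i+1}}\tilde m+\mathcal{E}\big(\tilde m\frac{d}{dx}\frac{\delta H}{\delta u}\big)$: it moves the differential operator off $\tilde m$ and onto $\frac{\delta H}{\delta u}$, at the cost of an Euler--Lagrange remainder. The final bookkeeping is to note that, evaluated along the solution, a total $x$-derivative of a differential polynomial equals the partial $x$-derivative of its composition with $u(x,t)$, because $\frac{d}{dx}=\partial_x+\sum_i u^{(i+1)}\partial_{u^{(i)}}$ reproduces exactly $\partial_x$ of $\mathcal{M}$, absorbing the explicit $x$-dependence of $M$. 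Hence $\frac{d^{i+1}}{dx^{i+1}}\tilde m|_{u=u(x,t)}=\frac{\partial^{i+1}\mathcal{M}}{\partial x^{i+1}}$, the coefficients turn into the $\mathcal{A}_i$, and collecting $\partial_t\tilde m$ with the Euler--Lagrange term yields precisely $\mathcal{R}(x,t)$, giving \eqref{eq:raymondGelfand}.

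I expect the only genuinely delicate point to be the Gel'fand--Dikii identity itself, which encodes the self-adjointness of the linearization of a variational derivative; I would invoke it as a cited result. The secondary care needed is to check that the formal calculus remains valid when $M$ depends explicitly on $x$ and $t$, so that the explicit $x$-dependence is correctly carried by the total derivative $\frac{d}{dx}$ while the explicit $t$-dependence survives only as the $\partial_t\tilde m$ contribution to $\mathcal{R}$. The continuity and existence hypotheses on the partial derivatives appearing on the right-hand side are exactly what justifies interchanging the $t$- and $x$-differentiations and applying the termwise chain rule, so no further analytic input is required.
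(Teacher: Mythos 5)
Your proposal is correct and follows essentially the same route as the paper's proof: chain rule in $t$, substitution of the evolution equation $\frac{d u^{(i)}}{dt}=\frac{d^{i+1}}{dx^{i+1}}\du{H}$, and the Gel'fand--Dikii identity applied with $\tilde f=\du{M}$, $\tilde g=\du{H}$ to transpose the operator and produce the remainder $\mathcal{R}$. Your additional remarks on the bookkeeping of the explicit $x,t$-dependence and the identification $\frac{d^{i+1}}{dx^{i+1}}\tilde m|_{u=u(x,t)}=\frac{\partial^{i+1}\mathcal{M}}{\partial x^{i+1}}$ simply make explicit steps the paper leaves implicit.
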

Let us analyze the result of the previous theorem.  In case $M$ is formally conserved, equation (\ref{eq:raymondGelfand}) holds with $\mathcal{R}=0$
and it therefore reduces to a linear PDE. In this particular case, the equation was already obtained by Lax  in \cite{lax75}.
Theorem \ref{thm:raymondGelfand} above extends Lax' result in two directions: first, we do not assume $M$ to be conserved, secondly, the equation remains valid in case $M$ is just a formal functional.  Such a generalization is important for us, as the variational string equation is the critical point equation of a formal functional $S_f$ that is not strictly conserved.

\subsection{The domain of the variational string equation}
We begin now to give a precise sense to the variational string equation; as a first step, we define the domain where the equation is defined. Let us choose an Hamiltonian perturbation of type \eqref{eveq}, admitting a string equation up to a certain order in $\e$. In addition, we fix:
\begin{itemize}
 \item A solution
$u(x,t;\e)$ of the corresponding Cauchy problem
\begin{equation}\label{eq:pde}
u_t= \partial_x \du{H_h} = h''(u)u_x+ O(\e^2), \qquad u(x,t=0,\e)=\varphi(x).
\end{equation}
\item An interval $(x_-,x_+)$ of strict monotonicity of
$\varphi$  such that $c(\varphi(x)) \neq 0$
for all $x \in (x_-,x_+)$.
\end{itemize}
Let $f': (a,b) \to \bb{R}, $ be defined by the relation
$$f'(\varphi(x))=\frac{x}{c(\varphi(x))}, \qquad x\in (x_-,x_+)  \; .$$
and let $S_f$ be the special quasi-conserved functional constructed in Section \ref{sec:string}. We recall that $S_f$ satisfies the properties \eqref{2st}, which can now be written as:
\begin{subequations}\label{st}
\begin{align}
&\partial_t \du{S_f}+\mathcal{E}\big({\du{S_f}\frac{d}{dx}\du{H_h}} \big) = \e^{2N+2} r \quad, \mbox{ for some } N\geq 1. \label{eq:stcomm}\\
&  \du{S_f}_{|\e=0} = \frac{1}{c(u)}\left( x+ h''(u) t - c(u)f'(u)  \right)\\
&\du{S_f}_{|t=0,u=\varphi(x)}  = 0 ,\qquad x \in (x_-,x_+)  \label{eq:stt0}
\end{align}
\end{subequations}
Here $r $ is some differential expression, polynomial in $\e^2$. 
We can now start to describe the domain of the string equation. Let us denote by $\sigma$, the variational derivative of $S_f$
evaluated at $u=u(x,t;\e)$, that is:
$$\sigma(x,t,\e):=\du{S_f}(u(x,t;\e)) \;.$$
Then, the following facts hold:
\begin{itemize}
\item[i)] The function $\sigma(x,t,\e)$ is well defined whenever $u(x,t,\e)$ belongs to the domain of $f'$, namely the interval $(a,b)$. Indeed,  the only source of singularities is the zero locus of $c(u)$ and, by construction, $c(u) \neq 0$ on $(a,b)$.

\item[ii)] By construction, $\sigma(x,t,\e=0)=0$ on the domain of the $(x,t)$ plane filled by the characteristics line emanating from the interval $(x_-,x_+)$. Indeed, on that domain equation \eqref{eq:func} of the method of characteristics holds.
\item[iii)] Due to \eqref{eq:stt0}, we have $\sigma(x,t=0,\e)=0$ if $x \in (x_-,x_+)$.
\end{itemize}
A meaningful domain for $\sigma$ must contain the subsets described in ii) and iii), where $\sigma$ is known to vanish. In addition, we ask the domain of $\sigma$ to be connected.
\begin{Def}\label{Sigmadomain}
Let $\Omega=\lbrace (x,t,\e) \in \bb{R}^3 || u(x,t,\e) \in (a,b)  \rbrace$
and let $\Omega^c$ be the connected component of $\Omega$ containing
$(x_-,x_+) \times \lbrace 0 \rbrace \times \bb{R}$. We define $\Omega^c$ to be the domain of $\sigma$, and we denote by $\Omega_{\e_{0}}^c$ the intersection of $\Omega^c$ with the hyperplane $\e=\e_{0}$.
\end{Def}
Notice that $\Omega_0^c$ is exactly the domain filled by the
characteristic lines (\ref{eq:chargen}) of the unperturbed equation with
initial point $x_0 \in (x_-,x_+)$.

\subsection{Proof of the validity of the variational string equation}

In Theorem \ref{thm:raymondGelfand}, we have shown that the variational derivative of a formal functional satisfies the linear PDE (\ref{eq:raymondGelfand}) with a forcing term proportional to the total time-derivative of the functional. Combining this result with \eqref{eq:stcomm}, we obtain that on the domain $\Omega^c$, the function $\sigma$ satisfies
\beq\label{eq:raymondgeneral}
\frac{\de\sigma}{\de t} =\sum_{i\geq 0} \mathcal{A}_i
\frac{\de^{i+1}\sigma}{\de x^{i+1}} + \e^{2N+2}\mathcal{R}, \qquad 
\sigma_{t=0}=0,
\eeq
where
$$\mathcal{A}_i(x,t,\e) = \left(\frac{\partial}{\partial u^{(i)}}\du{H_h}\right)_{|u=u(x,t;\e)} , \quad \mathcal{R}(x,t,\e) = r_{|u=u(x,t;\e)},$$
and  $r$ is as in (\ref{eq:stcomm}). Notice that
$r$ is a polynomial in $\e$, so that the forcing term in \eqref{eq:raymondgeneral} is $O(\e^{2N+2})$ provided $u$ is continuous as $\e \to 0$. If we consider just the explicit dependence on $\e^2$, then the linear differential operator $\sum_i \mathcal{A}_i\frac{\de^{i+1}}{\de x^{i+1}}$ acting on $\sigma$ is a \emph{polynomial} in $\e^2$. In particular, we can write it as:
\begin{eqnarray}\label{eq:Di}
\sum_{i=0}^m\mathcal{A}_i\frac{\de^{i+1}}{\de x^{i+1}} =
h''(u)\frac{\de}{\de x}+ \e^2 \Delta,
\end{eqnarray}
where $h''(u)$ is the same as in (\ref{eq:pde}) and $\Delta$ is a linear differential operator polynomial in $\e^{2}$, depending on $u$ and on its $x$-derivatives. Using (\ref{eq:raymondgeneral}) and above decomposition (\ref{eq:Di}), we prove that the string equation is
satisfied up to $O(\e^{2N+2})$.

\begin{thm}\label{thm:generalQ}
Suppose (\ref{eq:raymondgeneral}) depends on $u$ as well as on its
first $L$ derivatives, and let $[0,T]$ be a time interval such that the quantities
$$\frac{\partial^{j+2k} u(x,t,\e)}{\partial x^j \partial\e^{2k}}$$ 
exists and are continuous for $j\leq L, k\leq N$ for $(x,t,\e)\in \Omega^{c}$.  Then, we have that  $\sigma=O(\e^{2N+2})$ on
compact subsets of  $\Omega^c \cap \left\lbrace t \in [0,T] \right\rbrace$. Therefore, the conditions
$$\frac{\partial^{2k}\sigma(x,t,\e)}{\partial \e^{2k}}_{|\e=0}=0,$$
hold for $k=0,\dots,N$ and for any $t \in [0,T]$ .
\end{thm}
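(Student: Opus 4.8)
The plan is to prove that $\sigma = O(\e^{2N+2})$ by an energy estimate on the linear PDE \eqref{eq:raymondgeneral}, but carried out along a moving interval rather than on a fixed spatial domain. The fundamental difficulty is that $\sigma$ satisfies a linear PDE only on the bounded domain $\Omega^c$, and we have \emph{no control whatsoever over the boundary values} of $\sigma$ on $\partial\Omega^c$. A naive $L^2$ estimate on a fixed interval $[x_1,x_2]$ fails: when we differentiate $\|\sigma\|_{L^2}^2$ in time and integrate by parts the term $\int h''(u)\,\sigma_x\,\sigma\,dx$, we generate boundary contributions $\tfrac12 h''(u)\sigma^2\big|_{x_1}^{x_2}$ that we cannot bound. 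The idea announced in the introduction is to integrate instead over a time-dependent interval $[x_-(t), x_+(t)]$ whose endpoints move so as to \emph{cancel} these boundary terms.

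First I would set up the Leibniz differentiation of the energy on the moving interval. Writing $E(t) = \int_{x_-(t)}^{x_+(t)} \sigma^2\,dx$, the transport theorem gives
\begin{equation*}
\frac{dE}{dt} = \dot{x}_+\,\sigma^2\big|_{x_+} - \dot{x}_-\,\sigma^2\big|_{x_-} + 2\int_{x_-(t)}^{x_+(t)} \sigma\,\sigma_t\,dx.
\end{equation*}
Substituting $\sigma_t = h''(u)\sigma_x + \e^2\Delta\sigma + \e^{2N+2}\mathcal{R}$ from \eqref{eq:raymondgeneral} and \eqref{eq:Di}, then integrating by parts the principal term $2\int h''(u)\sigma\,\sigma_x\,dx = \int h''(u)\,\partial_x(\sigma^2)\,dx$, the leading boundary contribution is $h''(u)\sigma^2\big|_{x_-}^{x_+}$. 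Choosing the endpoints to satisfy the characteristic-type law $\dot{x}_\pm(t) = h''(u(x_\pm(t),t,\e))$ — these are precisely the deformed characteristics mentioned in the introduction — makes the boundary terms cancel identically, leaving
\begin{equation*}
\frac{dE}{dt} = -\int h'''(u)u_x\,\sigma^2\,dx + 2\e^2\int \sigma\,\Delta\sigma\,dx + 2\e^{2N+2}\int \sigma\,\mathcal{R}\,dx.
\end{equation*}
The curves $x_\pm(t)$ must be shown to remain inside $\Omega^c$ for $t\in[0,T]$, which follows because they are characteristics starting at the defining interval $(x_-,x_+)$, so the moving interval stays within the connected characteristic domain.

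The next step is a Grönwall argument on the remaining terms. The first term is bounded by $C\,E(t)$ using $\sup|h'''(u)u_x| \leq C$ on the compact region (finite by the regularity hypothesis through $L$ derivatives). The term $\e^2\int\sigma\,\Delta\sigma\,dx$ is the delicate one: $\Delta$ contains up to $L$ spatial derivatives of $\sigma$, so Cauchy--Schwarz alone gives $\e^2\|\sigma\|\,\|\Delta\sigma\|$, which involves higher Sobolev norms of $\sigma$ that the pure $L^2$ energy does not control. The resolution is to bootstrap in the $\e$-expansion: writing $\sigma = \sum_k \sigma_k \e^{2k}$ with $\sigma_k = \tfrac{1}{(2k)!}\partial_\e^{2k}\sigma|_{\e=0}$, each coefficient $\sigma_k$ satisfies a linear transport equation $\partial_t\sigma_k = h''(u^0)\partial_x\sigma_k + (\text{lower-order forcing from }\sigma_0,\dots,\sigma_{k-1})$ with zero initial data, where the forcing is a \emph{known} function built from already-controlled lower coefficients and from the $x$- and $\e$-derivatives of $u$ guaranteed to exist by hypothesis. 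One then proves $\sigma_k \equiv 0$ for $k=0,\dots,N$ by induction: $\sigma_0 = 0$ follows from item ii) of the domain discussion (the method of characteristics holds at $\e=0$), and given $\sigma_0 = \dots = \sigma_{k-1} = 0$, the forcing term for $\sigma_k$ vanishes up to the contribution from $\e^{2N+2}\mathcal{R}$, which does not enter for $k \leq N$; with zero initial data and zero forcing, the transport equation forces $\sigma_k \equiv 0$ on $\Omega^c_0$. I expect the main obstacle to be making the higher-derivative term $\e^2\int\sigma\,\Delta\sigma$ rigorous — specifically, justifying that the $\e$-expansion coefficients $\sigma_k$ are themselves well-defined and satisfy clean transport equations requires the existence and continuity of the mixed partials $\partial_x^j\partial_\e^{2k}u$ that the hypothesis supplies, and one must verify that differentiating \eqref{eq:raymondgeneral} in $\e$ commutes with restriction to $\Omega^c$. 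Once the coefficient-wise vanishing $\sigma_k\equiv 0$ for $k \leq N$ is established, the stated conclusion $\sigma = O(\e^{2N+2})$ on compact subsets, together with the vanishing of $\partial_\e^{2k}\sigma|_{\e=0}$, follows immediately from Taylor's theorem with remainder in $\e$.
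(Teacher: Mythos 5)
Your first half is exactly the paper's device: the $L^2$ energy on a moving interval whose endpoints evolve by a deformed characteristic law, chosen so that the boundary terms produced by integrating $2\int h''(u)\,\sigma\sigma_x\,dx$ by parts cancel (this is Lemma \ref{lem:defchar} and the estimate preceding \eqref{Qproofgen}). One slip there: the boundary contribution is $\left(\dot{x}_\pm + h''(u(x_\pm))\right)\sigma^2(x_\pm)$, so cancellation requires $\dot{x}_\pm = -h''(u(x_\pm(t),t,\e))$ as in \eqref{eq:gencharlem}; with your stated law $\dot{x}_\pm = +h''$ the boundary terms double rather than cancel. More substantively, your diagnosis of $\e^2\int \sigma\,\Delta\sigma\,dx$ as "the delicate term" requiring Sobolev control of $\sigma$ misreads the situation: $\sigma$ is not an unknown solution of the PDE but an explicit composition of $u$ and its first $L$ spatial derivatives, all continuous in $(x,t,\e)$ by hypothesis; hence $\Delta\sigma$ is itself a concrete continuous function and $B=\sup_\Pi\norm{\Delta\sigma}<\infty$ on the compact tube $\Pi$. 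Cauchy--Schwarz then closes the Gr\"onwall bound \eqref{Qproofgen} with no higher norms of $\sigma$, and the paper runs the induction \emph{inside} this estimate: once $\partial_\e^{2j}\sigma|_{\e=0}=0$ for $j\le k$, the differentiability hypothesis gives $\partial_\e^{2j}(\Delta\sigma)|_{\e=0}=0$, so $B$ upgrades to $B'\e^{2k+2}$ and the $L^2$ norm to $O(\e^{2k+4})$ --- no coefficient-wise transport equations are needed.

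The genuine gap is your final step. From $\partial_\e^{2k}\sigma|_{\e=0}=0$ for $k\le N$, Taylor's theorem with the regularity actually assumed ($\e$-derivatives of $u$ only up to order $2N$) yields $\sigma=o(\e^{2N})$ on compacts, \emph{not} $O(\e^{2N+2})$: the remainder form you invoke requires continuity of $\partial_\e^{2N+2}\sigma$, which the theorem does not provide. The two missing orders of $\e$ come from the structure of the equation --- the explicit prefactor $\e^2$ in the decomposition \eqref{eq:Di} and the $O(\e^{2N+2})$ forcing --- and are harvested only by one more pass through the energy inequality with $B=O(\e^{2N})$ (equivalently, by integrating $\norm{\sigma}\le\int_0^t\norm{\e^2\Delta\sigma+\e^{2N+2}\mathcal{R}}$ along the deformed characteristics). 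Your alternative route to the coefficient vanishing --- homogeneous transport equations for the $\sigma_k$ obtained by differentiating \eqref{eq:raymondgeneral} in $\e$ and propagating zero data along unperturbed characteristics --- is plausible and would deliver the second conclusion, but it rests on interchanging $\partial_\e^{2k}$ with $\partial_t$ and on $\e$-differentiability of the equation itself, which you flag as the main obstacle and leave unresolved; since $\partial_t u$ involves five more $x$-derivatives of $u$ via \eqref{eveq}, this is not innocuous, and the paper's argument is built precisely to avoid it by differentiating in $\e$ only the explicit functions $\sigma$ and $\Delta\sigma$. To repair your proof: fix the sign in the endpoint law, and replace the Taylor step by a final application of the Gr\"onwall estimate (or the characteristic integral bound) to extract the factor $\e^{2N+2}$.
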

Before proving the theorem we need a lemma.

\begin{lem}\label{lem:defchar}
Under the same hypothesis as in Theorem \ref{thm:generalQ}, let $x(t;\e)$ be the solution of the Cauchy problem
\begin{equation}\label{eq:gencharlem}
\frac{d x}{d t}=-h''(u(x(t),t;\e)), \qquad x(t=0,\e)=x(0) \in (x_-,x_+),
\end{equation}
with $h''(u)$ as in (\ref{eq:pde}) and initial data $x(0)$ independent of
$\e$. Then, we have:
\begin{itemize}
\item[(i)] For any initial data $x(0)$, the point $(x(t;\e),t,\e)$
belongs to $\Omega^c_{\e}$ for any $t\in[0,T]$, provided $\e$ is small
enough.
\item[(ii)] Let $x_1,x_2$ be two solutions of (\ref{eq:gencharlem}) with
initial data $x_1(0) <x_2(0)$, then
$x_1(t;\e)<x_2(t;\e), \forall t \in [0,T]$ provided $\e$ is small
enough.
\item[(iii)] Let $K$ be a compact subset of $\Omega^c$ and $K_{\e'}:=K \cap
\left\lbrace t\in[0,T], \e^2 \leq \e'^2 \right\rbrace$. Then, there exist
$x_1(0),x_2(0)$, with $x_1(0)< x_2(0)$, and $\e'>0$ such that
$$K_{\e'} \subset \bigcup_{\e^2 \leq \e'^2 , \, t \in [0,T]} \left(
[x_1(t,\e),x_2(t,\e)],t ,\e \right) \subset \Omega^c \; . $$
Here $[x_1(t,\e),x_2(t,\e)]$ is the closed interval with extremes
$x_1(t,\e),x_2(t,\e)$.
\end{itemize}
\begin{proof}
We prove (i). Points (ii,iii) can be proven along the same guidelines. Along the curve $\dot{x}=-h''(u(x(t),t;\e))$, the function $u$ varies slowly,
namely
$$\frac{d u}{dt}=\e^2P(u,u_x,\dots,u^{(j)},\e) \, , \; \mbox{ for some } j
\leq L \; ,$$
where $P$ is a differential expression , smooth in $u$ and its first $j$
derivatives and polynomial in $\e^2$. Therefore
$u(x(t,\e),t,\e)-u(x(0),0,\e)=O(\e^2 t)$ uniformly in $t \in [0,T]$.
Hence if $\e$ is small enough, $u(x(t,\e),t,\e)$ belongs to $(a,b)$, the domain
of $f'$.
This proves the thesis.
\end{proof}
\end{lem}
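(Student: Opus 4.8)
The plan is to read all three assertions as statements about the flow of the non-autonomous vector field $F(x,t;\e):=-h''(u(x,t;\e))$, exploiting that along its integral curves the profile $u$ is nearly stationary, and to run everything on the interval $[0,T]$ furnished by the hypotheses of Theorem \ref{thm:generalQ} (so $u$ and the relevant $u^{(j)}$ are continuous on $\Omega^c$). For (i) the key is the slow-variation identity: differentiating $u(x(t;\e),t;\e)$ along the curve \eqref{eq:gencharlem} and substituting the PDE \eqref{eq:pde} gives
\[
\frac{d}{dt}u(x(t;\e),t;\e)=u_t-h''(u)\,u_x=\e^2 P(u,u_x,\dots,u^{(j)};\e),
\]
with $P$ a differential polynomial. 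I would embed this in a continuation (bootstrap) argument: let $[0,t^\ast)$ be the maximal subinterval of $[0,T]$ on which the solution of \eqref{eq:gencharlem} exists and stays in $\Omega^c_\e$. On it $u(x(t;\e),t;\e)-\varphi(x(0))=O(\e^2 t)$ uniformly, since $P$ is bounded on the compact set swept out; as $\varphi(x(0))$ lies in the interior of $(a,b)$, for $\e$ small $u$ stays in a fixed compact subinterval of $(a,b)$, so the curve remains in $\Omega$ and $h''(u)$ stays bounded, preventing $x(t;\e)$ from escaping to infinity. The curve $t\mapsto(x(t;\e),t,\e)$ is then continuous, starts in $\Omega^c_\e$ and never leaves $\Omega$; since $\Omega^c$ is by Definition \ref{Sigmadomain} a connected component of $\Omega$, the whole curve lies in $\Omega^c_\e$, whence $t^\ast=T$.

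For (ii) I would invoke uniqueness for ODEs rather than an estimate. On $\Omega^c_\e$ the field $F(\cdot,t;\e)$ is $C^1$ in $x$, because $u$ and $u_x$ are continuous there, so by Picard--Lindel\"of two integral curves issuing from distinct points can never meet. Both curves exist on $[0,T]$ and remain in $\Omega^c_\e$ by (i); being continuous and unable to cross, they preserve the strict ordering, so $x_1(0)<x_2(0)$ forces $x_1(t;\e)<x_2(t;\e)$ for all $t\in[0,T]$.

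For (iii) I would first fix the unperturbed picture and then perturb. At $\e=0$ the curves \eqref{eq:gencharlem} are the genuine characteristics of \eqref{eq:hopfgen}, and for $T<t_c$ the flow $\Phi_0:(x_-,x_+)\times[0,T]\to\bb{R}^2$, $(x_0,t)\mapsto(x(t;x_0,0),t)$, is a diffeomorphism onto $\Omega^c_0\cap\{0\le t\le T\}$. Given compact $K\subset\Omega^c$, its slice $K_0$ is compact in $\Omega^c_0$, so the first-coordinate projection of $\Phi_0^{-1}(K_0)$ is a compact subset of $(x_-,x_+)$; I would pick $x_1(0)<x_2(0)$ in $(x_-,x_+)$ strictly enclosing it, so that the closed tube $\mathcal{T}_0:=\Phi_0([x_1(0),x_2(0)]\times[0,T])$ is a compact subset of $\Omega^c$ containing $K_0$ in its interior. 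As $\Omega$ is open and $\mathcal{T}_0$ compact, there is $\delta>0$ with the $\delta$-neighbourhood $N_\delta(\mathcal{T}_0)\subset\Omega$; being connected and meeting $\Omega^c$, this neighbourhood lies entirely in $\Omega^c$. By continuous dependence of the solutions of \eqref{eq:gencharlem} on the parameter $\e$, the intervals $[x_1(t,\e),x_2(t,\e)]$ converge uniformly to $[x_1(t,0),x_2(t,0)]$, so for $\e$ small the full $\e$-tube sits inside $N_\delta(\mathcal{T}_0)\subset\Omega^c$, which gives the second inclusion. The first inclusion $K_{\e'}\subset\bigcup\mathcal{T}_\e$ I would get by contradiction: a sequence in $K$ escaping the tubes with $\e_n\to0$ would subconverge to a point of $K_0$ lying outside the open $\e=0$ tube, contradicting $K_0\subset\mathrm{int}\,\mathcal{T}_0$.

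I expect the delicate points to be the two places where connectedness of $\Omega^c$ must be used to upgrade ``stays in $\Omega$'' to ``stays in the correct component $\Omega^c$'': once in the bootstrap of (i) and once in the neighbourhood step of (iii), where uniformity in $\e$ as $\e\to0$ is essential. The estimates themselves are soft; the real care lies in the continuation argument of (i) — guaranteeing that before time $T$ the curve neither blows up nor lets $u$ reach the endpoints of $(a,b)$ (equivalently the zero locus of $c$) — and in making the covering in (iii) uniform down to $\e=0$.
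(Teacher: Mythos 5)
Your proof of (i) is correct and is essentially the paper's own argument: the slow-variation identity $\frac{d}{dt}u(x(t;\e),t;\e)=u_t-h''(u)u_x=\e^2 P$ along the deformed characteristic, giving $u(x(t,\e),t,\e)-\varphi(x(0))=O(\e^2 t)$ uniformly on $[0,T]$ so that $u$ stays inside $(a,b)$ for $\e$ small, with your bootstrap and connectedness remarks merely making explicit what the paper leaves implicit. For (ii) and (iii) the paper offers no details beyond ``along the same guidelines,'' and your completions --- non-crossing of integral curves via Picard--Lindel\"of uniqueness (legitimate since $u_x$ is continuous on $\Omega^c$ by the hypotheses of Theorem \ref{thm:generalQ}), and for (iii) the compact $\e=0$ tube built from the characteristic flow together with continuous dependence on $\e$ and a compactness/contradiction argument --- are sound and fully consistent with the paper's intent.
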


\begin{proof}[Proof of Theorem \ref{thm:generalQ}]
The principle of the proof is to show that the $L^2$-norm  of $\sigma$ is
$O(\e^{2N+2})$ (uniformly) when evaluated on any subinterval
of $([x_1(t,\e),x_2(t,\e)]) $, for some function $x_1(t,\e)$, and $x_2(t,\e)$. Let us  compute how the $L^2$-norm varies on a time-dependent interval.
Using (\ref{eq:raymondgeneral}, \ref{eq:Di}), we get:
{\scriptsize \begin{eqnarray*}  \nonumber
& \norm{\frac{d}{dt}\int_{x_1(t)}^{x_2(t)}\sigma^2 dx} =
\norm{\int_{x_1}^{x_2}2 \sigma\,(h''(u)\sigma_x+\e^2 \Delta
\sigma+\e^{2N+2}
\mathcal{R}) dx+ \dot{x}_{2}\sigma^2(x_2)-
\dot{x}_{1}\sigma^2(x_1)} & \\ & \nonumber
=\big|-\int_{x_1}^{x_2}h'''(u)u_x\sigma^2+ 2 \sigma \big(
\e^2 \Delta \sigma+ \e^{2N+2} \mathcal{R}) \big)
dx  & \\& \nonumber
+ \left[\dot{x}_{2} +h''(u(x_2))\right]\sigma^2(x_2)- \left[
\dot{x}_{1}+h''(u(x_1))\right]\sigma^2(x_1) \big| &\\& \nonumber
\leq \sup_{x \in [x_1,x_2]}{\norm{h''(u)u_x}}
\int_{x_1}^{x_2}\sigma^2dx + \norm{\dot{x}_{2} +h''(u(x_2))}\sigma^2(x_2)+
\norm{\dot{x}_{1}+h''(u(x_1))}\sigma^2(x_1) + &\\& +
\big(\int_{x_1}^{x_2}\sigma^2 dx\big)^{1/2}  \left(
\e^2 \big( \int_{x_1}^{x_2}(\Delta \sigma)^2 dx\big)^{1/2}
+
\e^{2N+2} \big(\int_{x_1}^{x_2}\mathcal{R}^2 dx\big)^{1/2}   \right)
\; .& 
\end{eqnarray*}}
Here $\dot{x}_{i}=\frac{d x_i(t)}{dt}$ and the operator $\Delta$ is
defined in (\ref{eq:Di}). Choosing the boundary of the interval to evolve according to the law (\ref{eq:gencharlem}),
we get rid of the $\sigma^2(x_i)$ terms which are not controlled by the $L^2$ norm
\footnote{Note that in the $\e=0$, this is the law
of the characteristic lines \eqref{eq:chargen}.}:
\begin{eqnarray*}
&\norm{\frac{d}{dt}\int_{x_1(t)}^{x_2(t)}\sigma^2dx} \leq
\sup_{x \in [x_1,x_2]}{h''(u)\norm{u_x}}
\int_{x_1}^{x_2}\sigma^2dx  +&
\\ \nonumber &\big(\int_{x_1}^{x_2}\sigma^2 dx\big)^{1/2}
\left(
\e^{2} \big( \int_{x_1}^{x_2}(\Delta \sigma)^2 dx\big)^{1/2}
+
\e^{2N+2} \big(\int_{x_1}^{x_2}\mathcal{R}^2 dx\big)^{1/2}   \right) \; .&
\end{eqnarray*}
We define
$$\Pi = \bigcup_{\e^2 \leq \delta^2 , \, t \in [0,T]}([x_1(t,\e),x_2(t,\e)],t,\e)
$$
for some $\delta>0$. 
Applying the Gr\"onwall inequality (recalling that $\sigma_{|t=0}=0$ identically), we get the following bound for $\sigma$: 
\begin{eqnarray} \label{Qproofgen}
\norm{\int_{x_1(t)}^{x_2(t)}\sigma^2dx}^{\frac{1}{2}}  \leq \e^2  \,  A
( B+ \e^{2N} C) \frac{e^{DT}-1}{D} \, , \\ \nonumber
 A=\sup_{x,y \in \Pi}\norm{x-y}^{\frac{1}{2}} <\infty \; , B=
\sup_{\Pi}\norm{\Delta
\sigma(x,t)}  <\infty \, , \\ \nonumber
C=\sup_{\Pi}\norm{\cal{R}} <\infty \; ,  D=\sup_{\Pi}\norm{h''(u)u_x}
<\infty \; .
\end{eqnarray}
Since by Lemma \ref{lem:defchar} (ii) we have that $x_1(t) <x_2(t)$ for any $t$, provided $\e$ is small enough, the above estimate implies that $\sigma=O(\e^2)$ on $\Pi$. Therefore, we have $\sigma=O(\e^2)$ on any compact subset $K$ of $\Omega^c$. Indeed, the previous estimate holds for arbitrary $x_1(0),x_2(0)$ and
we know, by Lemma \ref{lem:defchar} (iii), that $K\cap\lbrace\e^2 \leq \e'^2 \rbrace \subset \Pi$ for some $x_1(0),x_2(0)$, and $\e'$. Therefore, by continuity, $\sigma=0$ on $\Omega^c_0$. We can now prove the thesis by induction. Suppose 
$$\frac{\partial^{2j} \sigma}{\partial \e^{2j}}_{|\e=0}=0,\qquad j=0,\dots,k$$
for some $k\leq N-1$. By the hypothesis on differentiability, 
$$\frac{\partial^{2j} \Delta \sigma}{\partial \e^{2j}}_{|\e=0}=0, \qquad j=0,\dots,k \,,$$
and so $\Delta \sigma=O(\e^{2k+2})$ on compacts of $\Omega^c$. Hence, (\ref{Qproofgen}) holds with $B= B' \e^{2k+2}$ for some constant $B'$. Therefore, 
$$\norm{\int_{x_1(t)}^{x_2(t)}\sigma^2dx}^{\frac{1}{2}} =O(\e^{2k+4})$$ 
for any $t \in[0,T]$. Reasoning as above, we obtain 
$$\frac{\partial^{2k+2} \sigma}{\partial\e^{2k+2}}=0,\qquad \text{on}\,\,\,\Omega^c_0.$$ 
The induction step is proven.
\end{proof}
\begin{rem}
In Theorem \ref{thm:generalQ} we can relax hypothesis (\ref{eq:stt0}).
Indeed the same Theorem holds if we choose
an initial data $\varphi$ such that
$\du{S_f}_{t=0,u=\varphi}=O(\e^{2N+2})$. The proof is valid with the
only immaterial modification of adding
an $O(\e^{2N+2})$ term on the right hand side of the estimate (\ref{Qproofgen}).
\end{rem}

\section{Particular examples and quasi-triviality for KdV}\label{sec:examples}
We have just shown that the String equation is approximately
satisfied, i.e. up to $O(\e^{2N+2})$ - provided the solution of the
Cauchy problem is regular
enough in the semiclassical limit. In fact, the hypothesis of
Theorem (\ref{thm:generalQ}) is rather reasonable.
We expect it to hold for very general Hamiltonian perturbations of Hopf
with a sufficiently regular initial data and for any time smaller than the critical time $t_c$. However, the only rigorous results in this direction are in our
previous paper \cite{mara11}, where we consider the Generalised KdV
case,
\begin{align}\label{eq:gkdv}
& u_t =h''(u)u_x+\sum_{j \geq 1}^{M} \e^{2j} \alpha_j u^{(2j+1)} ,
\qquad \alpha_j \in \bb{R}, \\ \nonumber
&u(t=0)=\varphi \in H^s(\bb{R}), \qquad s \geq 2M+1 \; .
\end{align}
Here $H^s$ is the Sobolev space of index $s \in \bb{R}$. We proved that the Cauchy problem has a well-posed semiclassical limit, namely
\begin{thm}\label{thm:mara}
Fix $K \in \bb{N}, K >1$ and let $u(x,t,\e)$ be the solution of the
Cauchy problem (\ref{eq:gkdv}).
If the initial data $\varphi$ belongs to $H^s(\bb{R})$ with $s\geq
3MK+2M+1$, then, for a time $T>0$ small enough, the
following partial derivatives exist and are continuous for every $x, t,\e$, with $
t\in [0,T]$:
\begin{equation*}
\frac{\partial^{i+j}u(x,t,\e)}{\partial x^{i}\partial \e^j}, \qquad
i< s-3MK-\frac{1}{2}, \quad  j\leq 2M K \; .
\end{equation*}
In particular, if $\varphi\in\bigcap_{s\geq 0}H^{s}(\mathbb{R})$ and for $(x,t)\in\bb{R}\times[0,T)$,  then the solution $u$ admits an asymptotic expansion as $\e\to 0,$ where all the coefficients are smooth functions of $x$ and $t$.
\end{thm}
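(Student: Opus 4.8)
The plan is to prove the theorem by $\e$-uniform energy estimates, supplemented by an induction on the number of $\e$-derivatives. Equation \eqref{eq:gkdv} is a semilinear dispersive perturbation of the quasilinear transport term $h''(u)u_x$: its top-order part $\e^{2M}\alpha_M u^{(2M+1)}$ has constant coefficient, and --- crucially for uniformity --- every dispersive term $\e^{2j}\alpha_j\partial_x^{2j+1}$ is skew-adjoint in $L^2$, since it is an odd-order constant-coefficient operator. Hence in the $H^s$ energy identity the dispersive terms do not contribute at top order, and the only growth comes from the transport term, controlled by Kato--Ponce-type commutator estimates in the form $\frac{d}{dt}\|u\|_{H^s}^2 \lesssim C(\|u\|_{H^s})$. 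I would first establish local existence for fixed $\e>0$ (using the dispersive smoothing of the $u^{(2M+1)}$ term, or a vanishing-viscosity regularization), and then use this energy identity to produce an existence time $T>0$ together with an $H^s$ bound, both independent of $\e\in(0,\e_0]$. This yields a solution $u(\cdot,t,\e)\in H^s$ on $[0,T]$, bounded uniformly in $\e$, and via the embedding of $H^s$ into $C^i$ for $i<s-\tfrac12$ it already gives continuity of $\partial_x^i u$ in that range.

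Second, I would obtain the joint regularity in $\e$ by differentiating \eqref{eq:gkdv} repeatedly. Setting $w_j:=\partial_\e^j u$, one checks that $w_j$ solves a \emph{linear} equation with the same skew-adjoint dispersive principal part $\sum_j \e^{2j}\alpha_j\partial_x^{2j+1}$, the linearized transport operator $h''(u)\partial_x+h'''(u)u_x$, and a forcing term $F_j$ built from $u$, its $x$-derivatives up to order $2M+1$, and the lower $\e$-derivatives $w_{j'}$, $j'<j$, each entering with a strictly positive power of $\e$. Running the same energy estimate for $w_j$ (the dispersive terms again dropping out) gives $\frac{d}{dt}\|w_j\|_{H^{s'}}^2 \lesssim \|w_j\|_{H^{s'}}^2 + \|w_j\|_{H^{s'}}\,\|F_j\|_{H^{s'}}$, so $w_j$ is controlled in $H^{s'}$ once $F_j$ is, and Gr\"onwall propagates the bound on $[0,T]$. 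The induction on $j$ then delivers existence and continuity of all mixed derivatives $\partial_x^i\partial_\e^j u$ in the asserted range.

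The delicate point --- and the main obstacle --- is the bookkeeping of derivative loss against powers of $\e$, that is, justifying that $2MK$ derivatives in $\e$ cost only $3MK$ derivatives in $x$ rather than the naive $(2M+1)\cdot 2MK$. The mechanism is the graded structure already visible in the transport hierarchy \eqref{eq:tr1}--\eqref{eq:tr2}: the coefficient $v^k$ of the formal expansion \eqref{expu} loses exactly $3$ spatial derivatives per unit of $k$, because in the recursion $v^k_t=\dots+\sum_j \alpha_j (v^{k-j})^{(2j+1)}$ the term maximizing the derivative count relative to the $\e$-order is the $j=1$ (KdV) term, whereas the higher dispersive terms, carrying higher powers of $\e$, contribute less per order. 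Propagating this grading through the energy estimates --- so that each $\e$-differentiation is paired with a gain of $\e$ offsetting all but $\tfrac32$ spatial derivatives on average --- is what produces the sharp count $s\geq 3MK+2M+1$, the extra $2M+1$ accounting for the base regularity needed to start the scheme.

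Finally, the asymptotic expansion in the $C^\infty$ case is immediate: if $\varphi\in\bigcap_{s\geq 0}H^s(\bb{R})$ then the above holds for every $K$, so $u(x,t,\e)$ is jointly smooth in $(x,t,\e)$ on $\bb{R}\times[0,T)\times(-\e_0,\e_0)$, and Taylor's theorem with remainder in the variable $\e$ yields the asymptotic series $u\sim\sum_i v^i\e^{2i}$ with smooth coefficients $v^i$, which by construction solve the transport equations \eqref{eq:vhopf}, \eqref{eq:tr1}, \eqref{eq:tr2}, and so on.
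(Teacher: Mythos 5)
Your outline should first be measured against what the paper actually does here: the paper gives no internal proof of Theorem \ref{thm:mara} --- its entire ``proof'' is the sentence ``This is a corollary of the more general Theorem 9 in \cite{mara11}''. The strategy of that reference is indeed the one you sketch ($\e$-uniform energy estimates exploiting skew-adjointness of the constant-coefficient dispersive terms, Kato--Ponce-type control of the transport part, induction on $\e$-derivatives with a graded loss of $x$-derivatives), so at the level of strategy your reconstruction is sound.

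As a proof, however, it has a genuine gap, located exactly where you flag ``the delicate point''. Your structural claim that every term of the forcing $F_j$ enters ``with a strictly positive power of $\e$'' is false, and falsely so in the way that matters: already for KdV, writing $w_j=\partial_\e^j u$, one has $\partial_\e^2\bigl(\e^2 u_{xxx}\bigr)=\e^2 w_{2,xxx}+4\e\, w_{1,xxx}+2u_{xxx}$, so the equation for $w_2$ contains the bare term $2u_{xxx}$ with no factor of $\e$ --- this is precisely the mechanism by which two $\e$-derivatives cost three $x$-derivatives. Consequently the induction cannot close in a single scale $H^{s'}$ as your Gr\"onwall step suggests: one needs a decreasing ladder of Sobolev indices across $j$, and the odd steps, where terms like $\e\, w_{j-1,xxx}$ carry only one factor of $\e$, force either $\e$-weighted norms or an interpolation argument to realize the average cost of $\tfrac{3}{2}$ $x$-derivatives per $\e$-derivative. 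You assert that this grading ``propagates through the energy estimates'', but that propagation \emph{is} the technical content of Theorem 9 of \cite{mara11}, not a routine verification, and your sketch does not supply it. Two further points need attention: (a) you differentiate the PDE in $\e$ assuming $w_j$ exists; rigorously one must first establish differentiability in $\e$, e.g.\ via uniform bounds on difference quotients or by solving the linearized equations and identifying their solutions with the derivatives; (b) in the final paragraph, the existence time $T$ in the theorem is allowed to depend on $K$, so joint smoothness ``for every $K$'' on a common interval $[0,T)$ requires an argument (the paper's subsequent Remark addresses this via global well-posedness for KdV/mKdV), whereas an order-by-order asymptotic expansion is all that the statement needs and all your Taylor argument in fact delivers.
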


This is a corollary of the more general Theorem 9 in \cite{mara11}. We can apply this Theorem to prove that the string equations is indeed
satisfied - up to the given error - by the solution the generalized KdV
(\ref{eq:gkdv}). From the theory developed in the previous section we can distinguish three cases:
\begin{itemize}
\item[A]Any generalised KdV equation admits a string equation of
order $O(\e^2)$.
\item[B]In some cases, for example if $h$ is cubic in $u$, then
the generalised KdV equations admits - at least - a variational string equation of order $\e^4$.
\item[C]Some particular equations, namely KdV or mKdV, which admit a variational string equation of arbitrary order.
\end{itemize}
In all cases, combining Theorems \ref{thm:generalQ} and \ref{thm:mara}
we can draw the conclusion that the string equation is indeed
satisfied up to the corresponding error, provided the initial data of the
equation belongs to a Sobolev space with high enough index. To give an example of how high the index must be, we consider the string equation of order $\e^4$ for KdV and mKdV. In this case, the linear equation (\ref{eq:raymondgeneral}) satisfied
by $\sigma$ depends on $u$ and its derivatives up to
$u^{(6)}$.
Due to Theorem \ref{thm:generalQ}, the String equation is satisfied up
to $O(\e^{6})$ if
$$\frac{\partial^{6}}{\partial x^{6}}\frac{\partial^{4}
u}{\partial \e^{4}}$$ exists and is continuous.
From Theorem \ref{thm:mara}, this holds provided $s>\frac{25}{2}$.

\begin{rem}
Suppose the Cauchy problem (\ref{eq:gkdv}) is globally well-posed if
$\e \neq 0$ and $\e$ small enough. Then
the time $T$ of Theorem \ref{thm:mara} can be chosen to be any time
smaller than the critical time of the solution of the Hopf equation. The above situation  happens in  the case of KdV and mKdV equations, and more generally,
if the Hamiltonian is positive definite or if $h(u)$ does not grow too
fast at infinity \cite{ka83}.
\end{rem}

As a corollary of our theory of the string equation we can prove that all the perturbative corrections of the Cauchy problem of KdV
are rational functions of the solution of the Hopf equation and its derivative.
In other words, there exists at any order in $\e$ a quasi-Miura transformation mapping the solution of the Hopf equation to the asymptotic expansion (for $\e$ small)
of the solution of KdV. This result was proved so far only for solutions of the Hopf equation with never vanishing first derivative and for initial data depending on $\e$ in a prescribed way, see \cite{du06}, \cite{lizh06} .  
\begin{thm}
Let $u(x,t,\e)$ be the solution of the KdV Cauchy problem
$$u_{t}=u\,u_{x}+\e^{2}u_{xxx},\qquad u_{t=0}=\varphi\in\bigcap_{s\geq 0}H^{s}(\mathbb{R}),$$
and let $t_{c}>0$ be the critical time of the Hopf equation with same initial data. Then,
for every $t\in [0,t_{c})$ and $x\in\bb{R}$, the solution $u$ possesses the asymptotic expansion
\beq\label{expufin}
u(x,t,\e)=\sum_{i\geq 0}\e^{2i}\,v^{i}(x,t),
\eeq
where 
\beq\label{ratvi}
v^{i}(x,t)=\xi^{i}(v^{0}(x,t),\de_{x}v^{0}(x,t),\dots,\de^{m_{i}}_{x}v^{0}(x,t),t),
\eeq
for some  $m_i.$ The functions $\xi^{i}(v^0,v^0_x, \dots, t), i \in \bb{N}$
are rational with respect to all the variables, and they do not depend on the initial data of the Cauchy problem.
\end{thm}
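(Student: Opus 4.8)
The plan is to combine the existence of a variational string equation at arbitrary order for KdV (discussed at the end of Section \ref{sec:string}, following from Galilean invariance \eqref{galilean} and the formal integrability proved in \cite{lizh06}) with the regularity result of Theorem \ref{thm:mara} and the approximation result of Theorem \ref{thm:generalQ}. First I would fix $N \in \bb{N}$ and argue that, since KdV admits a variational string equation $\du{S_f}=0$ of order $O(\e^{2N})$, Theorem \ref{thm:generalQ} yields
$$\frac{\partial^{2k}\sigma}{\partial\e^{2k}}_{|\e=0}=0, \qquad k=0,\dots,N,$$
on $\Omega^c_0$, provided the relevant mixed derivatives of $u$ exist and are continuous. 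The hypotheses of Theorem \ref{thm:generalQ} are supplied by Theorem \ref{thm:mara}: for $\varphi\in\bigcap_{s}H^s$, all the required partials $\partial^{i+j}u/\partial x^i\partial\e^j$ exist and are continuous on $\bb{R}\times[0,T)$, and by the remark following Theorem \ref{thm:mara} the time $T$ may be taken to be any $T<t_c$ because KdV is globally well-posed for $\e\neq 0$. Thus the asymptotic expansion \eqref{expufin} holds with smooth coefficients $v^i$ on all of $\bb{R}\times[0,t_c)$.

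The heart of the proof is to show that each $v^i$ is a \emph{rational} expression in $v^0$ and finitely many of its $x$-derivatives (with $t$-dependent rational coefficients). For this I would expand the variational string equation $\du{S_f}=0$ order by order in $\e^2$, exactly as was done in Section \ref{sec:solutions} for $v^1$ and $v^2$. At order $O(1)$ one recovers the hodograph relation $x+t\,v^0=c\,f'(v^0)$, i.e. \eqref{striue0} with $h''(v^0)=v^0$; this is what lets one eliminate all explicit occurrences of $f$ and its derivatives in favour of $v^0$ and its $x$-derivatives, as observed after \eqref{striue0}. At order $O(\e^{2i})$ the string equation is \emph{linear} in the highest unknown $v^i$: the coefficient multiplying $v^i$ is precisely $\partial_u\big(\du{S_f}\big)_{|\e=0}$ evaluated at $v^0$, which equals $\tfrac{1}{c}\,\partial_u(x+tv^0-cf')=\tfrac{1}{c}(t - (cf')')$, a nonvanishing quantity on $[0,t_c)$ related to $1+t\,v^0_x$ through the differentiated hodograph identities $\varphi_x = 1/(cf')'$, etc. Solving this linear relation expresses $v^i$ as the lower-order forcing term divided by this coefficient; inductively, the forcing is a polynomial in $v^0,v^1,\dots,v^{i-1}$ and their $x$-derivatives with coefficients rational in $v^0$ (and in the derivatives of the fixed function $h$, which for KdV are constants), while the differentiated hodograph relations express $f^{(j)}(v^0)$ and $\varphi$-derivatives as rational functions of $v^0,v^0_x,\dots$ and $t$. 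Hence by induction each $v^i$ is rational in $v^0$ and finitely many $\partial_x^\ell v^0$, with $t$-dependent coefficients, of the form \eqref{ratvi}.

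Two points require care. First, one must verify that the apparent singularities introduced by dividing by the string-equation coefficient, and by the factors of $(cf')'$ appearing from the hodograph differentiation, cancel, so that the resulting $\xi^i$ are genuinely defined (and finite) for $t\in[0,t_c)$; this is the analogue of the explicit cancellation of the apparent singularities of $K_3$ at $t=0$ and $h'''=0$ noted after \eqref{eq:canonicK}, and for KdV it is controlled by the denominator $(1+t\,v^0_x)$, which is strictly positive for $t<t_c$. Second, one must check that the $\xi^i$ do not depend on the initial datum: this follows because the string equation and the elimination of $f$ proceed through the \emph{universal} hodograph identity, so that after substitution the formulae involve only $v^0$, its $x$-derivatives, $t$, and the (initial-data-independent) coefficients of the KdV Hamiltonian. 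I expect the main obstacle to be the induction on the order: one must simultaneously control that (a) the forcing term at order $i$ is rational with poles only at $1+t\,v^0_x=0$, and (b) the leading coefficient of $v^i$ is exactly the nonvanishing factor above, so that the quotient stays rational and regular on $[0,t_c)$. Making this bookkeeping precise — rather than the analytic input, which is already furnished by Theorems \ref{thm:generalQ} and \ref{thm:mara} — is where the real work lies.
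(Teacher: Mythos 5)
Your overall strategy coincides with the paper's: existence of the string equation at arbitrary order for KdV (Galilean invariance plus formal integrability), regularity from Theorem \ref{thm:mara} combined with global well-posedness to reach any $T<t_c$, validity from Theorem \ref{thm:generalQ}, and then an order-by-order expansion of the string equation \eqref{eq:kdvtriv} with elimination of $f$ through the hodograph relation. However, there is a genuine gap in your treatment of the locus $\{v^0_x=0\}$, and your diagnosis of where the singularities sit is wrong. Differentiating $x+t\,v^0=f'(v^0)$ in $x$ gives $f''(v^0)=t+1/v^0_x$, so the coefficient of the highest unknown $v^{k+1}$ in the order-$\e^{2k+2}$ relation, namely $t-f''(v^0)=-1/v^0_x$, is \emph{not} ``a nonvanishing quantity related to $1+t\,v^0_x$'': it, together with all the eliminated quantities $f^{(j)}(v^0)$, is singular exactly where $v^0_x=0$. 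Hence the expression you solve for has a priori poles at $v^0_x=0$ — not at $1+t\,v^0_x=0$, which indeed stays strictly positive for $t<t_c$ and is harmless — and such points are unavoidable for data in $\bigcap_s H^s(\bb{R})$. Your item (a), that the forcing at each order has poles only at $1+t\,v^0_x=0$, is precisely the nontrivial cancellation to be proved, and your proposal offers no mechanism for it beyond ``bookkeeping''; it is far from clear that a direct pole-tracking induction can be carried out.

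The paper closes this gap by an indirect argument that your proposal is missing and which uses the transport equations \eqref{eq:transpoquasi}, never invoked in your write-up. One first restricts to points where $v^0_x\neq 0$; there the elimination is legitimate and \eqref{eq:derstringquasi} yields $v^{k+1}=\xi^{k+1}(v^0,v^0_x,\dots,t)$ with $\xi^{k+1}$ rational. Then, since by Theorem \ref{thm:mara} the coefficient $v^{k+1}(x,t)$ is smooth on $\bb{R}\times[0,t_c)$ for \emph{arbitrary} smooth initial data — so that arbitrary jet values are realized — the rational function $\xi^{k+1}$ must in fact be analytic at $v^0_x=0$ for $t\in[0,t_c)$: the poles cancel for structural reasons, not by computation. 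Finally, using $v^0_t=v^0v^0_x$ and the induction hypothesis, the $(k+1)$-th transport equation reduces to an algebraic identity for $\xi^{k+1}$, valid on its domain of analyticity; uniqueness in $H^\infty$ of the solution of \eqref{eq:transpoquasi} with zero initial data then identifies $v^{k+1}$ with $\xi^{k+1}$ across $\{v^0_x=0\}$, on all of $\bb{R}\times[0,t_c)$. Without this step (or a genuine substitute for it), your induction establishes \eqref{ratvi} only on the open set where $v^0_x\neq 0$, and the theorem as stated remains unproved.
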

\begin{proof} From the results of \cite{mara11}, it follows that for the class of initial data in the hypotheses,
the solution $u$ admits an asymptotic expansion, for times smaller that $t_{c}$, of the form \eqref{expufin}. Here, $v^{0}$ is the solution of the Hopf Cauchy
problem with same initial data, and the smooth functions $v^{i}$ are the unique solutions (in $H^{\infty}$) of the semilinear equations
\begin{equation}\label{eq:transpoquasi}
 v^{i}_{t}=\sum_{j=0}^{i}\binom{i}{j}v^{j}v^{i-j}_{x}+i\,v^{i-1}_{xxx},\qquad
v^{i}|_{t=0}=0,\qquad i\geq 1 \; .
\end{equation}
Given any smooth initial data $\varphi$, we know that for any $N\geq 0$, the KdV equation admits a functional $S_f$ satisfying (\ref{2ste0}, \ref{2stt0}),
and condition (\ref{2stcomm}) with $K=2N+2$. By the results of Section \ref{sec:string}, the corresponding string equation is of the form
\begin{equation}\label{eq:kdvtriv}
x+u t= f'(u) + \sum_{j\geq1} \e^{2j} T_j(f''(u),\dots,f^{(k_j)}(u), u_x,\dots,u^{(2j)} ) \; ,
\end{equation}
for some functions $T_j$, depending rationally on $f'(u),\dots, f^{(k_j)}(u)$, and polynomially on $u_x,\dots, u^{(2j)}$. As usual,
here $f'$ is a local inverse of $\varphi$. Note that the $T_{i}$ are independent of the initial data.

We prove the theorem by induction. The zero-th correction is trivially true by taking $\xi^0(v^0)=v^0$.
Suppose now \eqref{ratvi} holds for all $\xi^j$ for $j=0,\dots,k$.

Let us restrict our attention to points $(x, t)$ with $t \in [0,t_c)$ and such that $v^{0}_{x}(x,t) \neq 0$. Indeed,
by Theorems \ref{thm:generalQ} and \ref{thm:mara} and for smooth initial data, in this case all the $\e$-derivatives of
(\ref{eq:kdvtriv}) vanish at $\e=0$. Taking the $(2k+2)$-th partial derivative of
(\ref{eq:kdvtriv}) with respect to $\e$, and evaluating at $\e=0$, we get
\begin{equation}\label{eq:derstringquasi}
(t-f^{(k+2)}(v^0))  \,v^{k+1} = R_{k+1}(f''(v^0), \dots, f^{(m)}(v^0), v^0_x, \dots,  v^k, v^k_x \dots) \; ,
\end{equation}
for some rational function $R_{k+1}$. Now since $x+v^0t=f'(v^0)$, then the quantities $f^{(i)}(v^{0})$ can be written as
rational functions of $t$ and of the derivatives of $v^0$. Due to the induction hypothesis, the same can be said for
$v^i, 1\leq i\leq k$ and their derivatives. Therefore, by (\ref{eq:derstringquasi}) $v^{k+1}$ is a rational function
of $t$ and of $v^0$ and its derivatives. We call this function $\xi^{k+1}$.

It remains to prove that $v^{k+1}(x,t) = \xi^{k+1}(v^0(x,t),v^0_x(x,t), \dots, t)$ also where $v^0_x=0$, provided
$(x,t) \in\bb{R}\times [0,t_c[$. This is easily accomplished by noticing that $\xi^{k+1}(v^0(x,t),v^0_x(x,t), \dots, t)$
satisfies (\ref{eq:transpoquasi}) -with $i=k+1$- for any $(x,t) \in\bb{R}\times [0,t_c[$.

In fact $\xi^{k+1}$ is analytic at $v^0_x=0$ for any value of the jets variables $v^0, v^0_{xx}, \dots$ and
provided $t \in [0, t_c)$, because due to Theorem \ref{thm:mara}
$v^{k+1}(x,t)$ is smooth for any initial data and for $(x,t) \in \bb{R}\times [0,t_{c})$.
Moreover, using the identity $v^0_t=v^0v^0_x$ and the induction hypothesis on $\xi^i, i=i,\dots,k$
the $k+1$-th transport equation (\ref{eq:transpoquasi}) reduces to an algebraic identity for $\xi^{k+1}$ which is then satisfied identically on the domain of analyticity
of $\xi^{k+1}$.
\end{proof}

\section*{Conclusions}

We have introduced a deformation of the method of characteristics valid for Hamiltonian perturbations
of a scalar conservation law in the small $\e$ (dispersionless, semiclassical) limit. Our method of analysis is based on the \lq variational string equation\rq, a functional-differential relation originally introduced by Dubrovin \cite{du06} in a particular case, of which we have laid the mathematical foundation. Starting from first principles, we have constructed the string equation explicitly up to the fourth order in perturbation theory and
we have shown that the solution of (the Cauchy problem of) the Hamiltonian PDE satisfies the appropriate string equation in the small $\e$ limit and for times smaller than the critical time. Moreover, we applied our construction to compute explicitly the first two perturbative corrections of the solution of the general Hamiltonian PDE, and we proved for KdV the existence of a global quasi-triviality transformation at any order in $\e$.\\

Yet many questions remain open about the string equation and its applicability, and more in general, about
the behaviour of solutions of Hamiltonian PDEs in the small $\e$ regimes. We plan to address them in future publications. A prominent open problem is the rigorous treatment of the dispersive shock phenomenon,
the behaviour of solutions of Hamiltonian PDEs in the small $\e$ regime and close to the critical time.
According to the Dubrovin universality conjecture, the solution of the generic
Hamiltonian PDE undergoes a phase transition from a regular behaviour to a highly oscillatory behaviour and this transition can be described in term of a special solution of the integrable ODE known as PI(2), which is the second member of the Painlev\'e I hierarchy. In fact, a formal multiscale expansion centered in a neighborhood of the critical time transforms any Hamiltonian PDE to KdV and the same multiscale expansion reduces the
fourth order string equation of KdV to PI(2). Remarkably, all higher order
terms of the string equation do not contribute to the expansion, so that the PI(2) correction is beyond all order in perturbation theory while based on a fourth order computation.\\

Our work provides another argument in favor of the universality conjecture.
Indeed, the same formal computation shows that the fourth order
string equation of any Hamiltonian PDE (which we constructed)
reduces to PI(2) in that multiscale expansion. To make such heuristic considerations into a  proof we plan to address the validity of the string equation in a neighborhood of the critical time.

Another problem we plan to address is more algebraic in nature: as we have shown, if we consider only functionals  with densities polynomial in the jets, then not all the Hamiltonian PDEs admit a string equation of order four. Indeed, a non-trivial obstruction arises at the fourth order and, presumably, higher order terms would impose more constraints on the Hamiltonian. We will investigate whether the obstruction(s) can be lifted by considering more general functionals, for example rational in the jet variables.

\paragraph{Acknowledgments} D.M. is since January 2012 a Scholar
of the Fundacao da Ciencia e Tecnologia, scholarship number $SFRH/ BPD/
75908/ 2011$. A.R. and D.M. thank, respectively, the Grupo de F\'isica Matem\'atica da Universidade de Lisboa and the SISSA Area of Mathematics for the kind hospitality. D.M. thanks Dr. Chaozhong Wu for having partially funded his visit to
SISSA through the SISSA grant \lq Giovani Ricercatori della SISSA\rq. We thank B. Dubrovin for pointing out that our results could
imply the Quasi-Triviality for KdV.

\bibliographystyle{plain}
\bibliography{biblio}

\end{document}